\newcommand{\oubli}[1]{} 
\newcommand{\trycatch}{\mathtt{try}\texttt{-}\mathtt{catch}}
\newcommand{\sto}{\!\!\to\!\!}
\newcommand{\successor}{\mathtt{s}}
\newcommand{\predecessor}{\mathtt{p}}
\title{\mytitle}
\author{Jean-Guillaume Dumas\thanks{Laboratoire J. Kuntzmann,
    Universit\'e de Grenoble. 51, rue des Math\'ematiques, umr CNRS
    5224, bp 53X, F38041 Grenoble, France,
    \href{mailto:Jean-Guillaume.Dumas@imag.fr,Dominique.Duval@imag.fr,Burak.Ekici@imag.fr}{\{Jean-Guillaume.Dumas,Dominique.Duval,Burak.Ekici\}@imag.fr}.}
  \and Dominique Duval\footnotemark[1]
  \and Burak Ekici\footnotemark[1]
  \and Damien~Pous\thanks{Plume team, CNRS, ENS Lyon, Universit\'e de
    Lyon, INRIA, UMR 5668, France, 
    \href{mailto:Damien.Pous@ens-lyon.fr}{Damien.Pous@ens-lyon.fr}.}
  \and Jean-Claude~Reynaud\thanks{Reynaud Consulting (RC), \href{mailto:Jean-Claude.Reynaud@imag.fr}{Jean-Claude.Reynaud@imag.fr}.}
}
\theoremstyle{plain}
\newtheorem{theorem}{Theorem}[section]
\newtheorem{proposition}[theorem]{Proposition}
\newtheorem{lemma}[theorem]{Lemma}
\theoremstyle{definition}
\newtheorem{definition}[theorem]{Definition}
\theoremstyle{remark}
\newtheorem{remark}[theorem]{Remark}
\newtheorem{example}[theorem]{Example}
\newtheorem{corollary}[theorem]{Corollary}
\newtheorem{assumption}[theorem]{Assumption}
\begin{document}
\maketitle
\begin{abstract}
A theory is complete if it does not contain a contradiction, while all
of its proper extensions do.
In this paper, first we introduce a relative notion of 
syntactic completeness; then we prove that adding exceptions 
to a programming language can be done in such a way that 
the completeness of the language is not made worse. 
These proofs are formalized in a logical system which is 
close to the usual syntax for exceptions,
and they have been checked with the proof assistant Coq.
\end{abstract}
\section{Introduction}
%
% the problem 
In computer science, 
an exception is an abnormal event occurring during the execution 
of a program. A mechanism for handling exceptions consists of 
two parts: an exception is \emph{raised} when an abnormal event occurs,
and it can be \emph{handled} later, by switching the execution to 
a specific subprogram. 
Such a mechanism is very helpful, but it is difficult 
for programmers to reason about it. 
A difficulty for reasoning about programs involving exceptions 
is that they are \emph{computational effects},
in the sense that their syntax does not look like their interpretation:%
%(the interpretation of a type $X$ is also denoted $X$)%
typically,   
a piece of program with arguments in $X$ that returns a value in $Y$
is interpreted as a function from $X+E$ to $Y+E$ where $E$ is 
the set of exceptions. 
On the one hand, reasoning with $f:X\to Y$ is close to the syntax, 
but it is error-prone because it is not sound with respect to the semantics. 
On the other hand, reasoning with $f:X+E\to Y+E$ is sound but 
it loses most of the interest of the exception mechanism, 
where the propagation of exceptions is implicit: 
syntactically, $f:X\to Y$ may be followed by any $g:Y\to Z$, 
since the mechanism of exceptions will take care of propagating 
the exceptions raised by $f$, if any.
Another difficulty for reasoning about programs involving exceptions 
is that the handling mechanism is encapsulated in a $\trycatch$ block, 
while the behaviour of this mechanism is easier to explain 
in two parts (see for instance~\cite[Ch. 14]{java} for Java 
or~\cite[\S 15]{cpp} for C++): 
the $\catch$ part may recover from exceptions,
so that its interpretation may be any $f:X+E\to Y+E$, 
but the $\trycatch$ block must propagate exceptions,
so that its interpretation is determined by some $f:X\to Y+E$. 

% our method 
In \cite{DDER14-exc} we defined a logical system 
for reasoning about states and exceptions and we used it for getting 
certified proofs of properties of programs in computer algebra, 
with an application to exact linear algebra. 
This logical system is called the \emph{decorated logic} 
for states and exceptions. Here we focus on exceptions.
The decorated logic for exceptions 
deals with $f:X\to Y$, without any mention of~$E$, 
however it is sound thanks to a classification of the terms and the equations. 
Terms are classified, as in a programming language, 
according to the way they may interact with exceptions: a term  
either has no interaction with exceptions (it is ``pure''), 
or it may raise exceptions and must propagate them, 
or it is allowed to catch exceptions 
(which may occur only inside the $\catch$ part of a $\trycatch$ block). 
The classification of equations follows a line that was introduced 
in \cite{DD10}: 
besides the usual ``strong'' equations, interpreted as equalities of functions, 
in the decorated logic for exceptions there are also ``weak'' equations, 
interpreted as equalities of functions on non-exceptional arguments.
This logic has been built so as to be sound, 
but little was known about its completeness. 
% our results 
In this paper we prove a novel completeness result:
the decorated logic for exceptions is \emph{relatively Hilbert-Post complete},
which means that adding exceptions to a programming language can be done 
in such a way that the completeness of the language is not made worse. 
For this purpose, we first define and study the novel notion
of {\em relative} Hilbert-Post completeness, which seems to be 
a relevant notion for the completeness of various computational
effects:
indeed, we prove that this notion is preserved when combining
effects. 
Practically, this means that we have defined a decorated framework 
where reasoning about programs with and without exceptions are 
equivalent, in the following sense: 
if there exists an unprovable equation not contradicting the given 
decorated rules, then this equation is equivalent to a set of 
unprovable equations of the pure sublogic not contradicting its rules.

% \ref{sec:hpc}
Informally, in classical logic, a consistent theory is one that does not
contain a contradiction and a theory is complete if it is
consistent, and none of its proper extensions is consistent.
Now, the usual (``\emph{absolute}'') Hilbert-Post completeness, 
also called Post completeness, is a syntactic notion of completeness 
which does not use any notion of negation,
so that it is well-suited for equational logic. 
In a given logic $L$, we call \emph{theory} a set of sentences 
which is deductively closed: everything you can derive from it
(using the rules of $L$) is already in it. 
Then, more formally, a theory is \emph{(Hilbert-Post) consistent} if it
does not contain  all sentences, 
and it is \emph{(Hilbert-Post) complete} if it is consistent 
and if any sentence which is added to it generates an inconsistent
theory~\cite[Def. 4]{Tarski30}.  

% \ref{sec:coq} 
All our completeness proofs have been verified with the Coq proof
assistant.
First, this shows that it is possible to formally prove that programs
involving exceptions comply to their specifications.
Second, this is of help for improving the confidence in the
results.
Indeed, for a human prover, proofs in a decorated logic require some care: 
they look very much like familiar equational proofs, 
but the application of a rule may be subject to restrictions 
on the decoration of the premises of the rule. 
The use of a proof assistant in order to
check that these unusual restrictions were never violated has thus
proven to be quite useful.
Then, many of the proofs we give in this paper require a structural
induction. There, the correspondence between our proofs and their Coq
counterpart was eased, as structural induction is also at the core of
the design of Coq.

% related work 
%As mentioned above,
A major difficulty for reasoning about programs involving exceptions,
and more generally computational effects, 
is that their syntax does not look like their interpretation: typically,
a piece of program from $X$ to $Y$
is not interpreted as a function from $X$ to $Y$, because of the effects.
The best-known algebraic approach for dealing with this problem has been
initiated by Moggi: %; from this point of view,
an effect is associated to a monad $T$, in such a way that 
the interpretation of a program from $X$ to $Y$ is a function
from $X$ to $T(Y)$~\cite{Moggi91}:
typically, for exceptions, $T(Y)=Y+E$. 
Other algebraic approaches include effect systems~\cite{LucassenGifford88}, 
Lawvere theories~\cite{PlotkinPower02}, 
algebraic handlers~\cite{PP09}, 
comonads~\cite{UV08,POM14}, 
dynamic logic~\cite{MSG10}, among others. 
Some completeness results have been obtained, for instance  
for (global) states~\cite{Pretnar10} and   
for local states~\cite{Staton10-fossacs}. 
The aim of these approaches is to extend functional languages 
with tools for programming and proving side-effecting programs;  
implementations include Haskell \cite{BHM00}, Idris \cite{Idris}, 
Eff \cite{BauerP15}, 
while Ynot \cite{Ynot} is a Coq library for writing and verifying 
imperative programs. 

% our work wrt related work 
Differently, our aim is to build a logical system for proving 
properties of some families of programs written in widely used 
non-functional languages like Java or C++\footnote{For instance, a
  denotational semantics of our framework for exceptions, which relies
  on the common semantics of exceptions in these languages, was given
  in~\cite[\S~4]{DDER14-exc}.}.
The salient features of our approach are that:
%\begin{enumerate}
%\item 
\\{\bf (1)}
The syntax of our logic is kept close to the syntax of 
programming languages. This is made possible by starting from a simple
syntax without effect and by adding decorations, which often
correspond to keywords of the languages, for taking the effects into
account. 
%\item 
\\{\bf (2)}
We consider exceptions in two settings, the programming
  language and the core language. This enables for instance to
  separate the treatment, in proofs, of the matching between normal or
  exceptional  behavior from the actual recovery after an exceptional
  behavior. 
%\end{enumerate}

In Section~\ref{sec:hpc} we introduce a \emph{relative} notion of 
Hilbert-Post completeness in a logic $L$ with respect to a sublogic
$L_0$.  
% \ref{sec:exc}
Then in Section~\ref{sec:exc} we prove the relative Hilbert-Post completeness 
of a theory of exceptions based on the usual $\throw$ and 
$\trycatch$ statement constructors. 
% \ref{sec:excore}
We go further in Section~\ref{sec:excore} by establishing 
the relative Hilbert-Post completeness 
of a \emph{core} theory for exceptions with individualized 
$\trycore$ and $\catchcore$ statement constructors, 
which is useful for expressing the behaviour of the $\trycatch$
blocks. 
% \ref{sec:coq} 
All our completeness proofs have been verified with the Coq proof
assistant and we therefore give the main ingredients of the framework
used for this verification and the correspondence between our Coq
package and the theorems and propositions of this paper in
Section~\ref{sec:coq}.

\section{Relative Hilbert-Post completeness}
\label{sec:hpc}

Each logic in this paper comes with a \emph{language}, 
which is a set of \emph{formulas}, 
and with \emph{deduction rules}.  
Deduction rules are used for deriving (or generating) \emph{theorems}, 
which are some formulas, from some chosen formulas called \emph{axioms}. 
A \emph{theory} $T$ is a set of theorems 
which is \emph{deductively closed}, in the sense that 
every theorem which can be derived from $T$ using the rules of the logic 
is already in $T$. 
We describe a set-theoretic \emph{intended model} 
for each logic we introduce; 
the rules of the logic are designed so as to 
be \emph{sound} with respect to this intended model. 
Given a logic $L$, the theories of $L$ are partially ordered by inclusion. 
There is a maximal theory $T_\mymax$, where all formulas are theorems. 
There is a minimal theory $T_\mymin$, which is generated by 
the empty set of axioms. 
For all theories $T$ and $T'$, we denote by $T+T'$ the theory 
generated from $T$ and~$T'$. 

\begin{example}
\label{ex:eqn}
With this point of view there are many different 
\emph{equational logics}, with the same deduction rules 
but with different languages, depending on the definition of \emph{terms}.  
In an equational logic, formulas are \emph{pairs of parallel terms} 
$(f,g):X\to Y$
and theorems are \emph{equations} $f\equiv g:X\to Y$. 
Typically, the language of an equational logic 
may be defined from a \emph{signature} (made of sorts and operations).  
The deduction rules are such that the equations in a theory form 
a \emph{congruence}, i.e., an equivalence relation compatible with 
the structure of the terms.
For instance, we may consider the logic ``of naturals'' $L_\nat$, 
with its language generated from the signature made of a sort $N$, 
a constant $0:\unit\to N$ and an operation $s:N\to N$. 
For this logic, the minimal theory is the theory ``of naturals'' $T_\nat$,
the maximal theory is such that $s^k\equiv s^\ell$ 
and $s^k\circ0\equiv s^\ell\circ0$ for all natural numbers $k$ and $\ell$,
and (for instance) the theory ``of naturals modulo~6'' $T_\modsix$
can be generated from the equation $s^6\equiv \id_N$. 
We consider models of equational logics in sets: 
each type $X$ is interpreted as a set (still denoted $X$), 
which is a singleton when $X$ is $\unit$, 
each term $f:X\to Y$ as a function from $X$ to $Y$  
(still denoted $f:X\to Y$),
and each equation as an equality of functions. 
\end{example}

\begin{definition}
\label{defi:hpc}
Given a logic $L$ and its maximal theory $T_\mymax$, 
a theory $T$ is \emph{consistent} if $T\ne T_\mymax$,
and it is \emph{Hilbert-Post complete} 
if it is consistent 
and if any theory containing $T$ 
coincides with~$T_\mymax$ or with~$T$. 
\end{definition}

\begin{example}
\label{ex:hpc-eqn}
In Example~\ref{ex:eqn} we considered two theories for the logic $L_\nat$:  
the theory ``of naturals'' $T_\nat$ 
and the theory ``of naturals modulo~6'' $T_\modsix$.
Since both are consistent and $T_\modsix$ contains $T_\nat$, 
the theory $T_\nat$ is not Hilbert-Post complete.
A Hilbert-Post complete theory for $L_\nat$ is made of 
all equations but $s\equiv \id_N$, it can be generated from the axioms 
$s\!\circ\! 0 \!\equiv\! 0$ and $s\!\circ\! s \!\equiv\! s$.
\end{example}

If a logic $L$ is an extension of a sublogic $L_0$, 
each theory $T_0$ of $L_0$ generates a theory $F(T_0)$ of $L$.
Conversely, each theory $T$ of $L$ determines a theory $G(T)$ of $L_0$, 
made of the theorems of $T$ which are formulas of $L_0$, 
so that $G(T_\mymax)=T_\mymaxz$. 
The functions $F$ and $G$ are monotone and 
they form a \emph{Galois connection}, denoted $F\dashv G$: 
for each theory $T$ of $L$ and each theory $T_0$ of $L_0$ we have 
$F(T_0)\subseteq T$ if and only if $T_0\subseteq G(T)$.
It follows that $T_0\subseteq G(F(T_0))$ and $F(G(T))\subseteq T$. 
Until the end of Section~\ref{sec:hpc}, we consider: 
%\begin{itemize}
%\item[]
\emph{a logic $L_0$, an extension $L$ of $L_0$, 
and the associated Galois connection $F\dashv G$.}
%\end{itemize}
%
\begin{definition} 
\label{defi:hpc-rel}
A theory $T'$ of~$L$ is \emph{$L_0$-derivable} 
from a theory $T$ of~$L$ if $T'=T+F(T'_0)$ 
for some theory $T'_0$ of~$L_0$.
A theory $T$ of~$L$ is \emph{(relatively) Hilbert-Post complete with respect to} $L_0$
if it is consistent 
and if any theory of~$L$ containing $T$ 
is $L_0$-derivable from~$T$. 
\end{definition}

Each theory $T$ is $L_0$-derivable from itself, 
as $T=T+F(T_\myminz)$, where $T_\myminz$ 
is the minimal theory of $L_0$. 
In addition, Theorem~\ref{theo:hpc} shows that 
relative completeness 
lifts the usual ``absolute'' completeness from $L_0$ to $L$,
and Proposition~\ref{prop:hpc-compose} proves that 
relative completeness is well-suited to the combination of effects. 
\begin{lemma}
\label{lemm:hpc-rel}
For each theory $T$ of $L$,   
a theory $T'$ of $L$ is $L_0$-derivable from $T$ 
if and only if $T'=T+F(G(T'))$. 
As a special case, $T_\mymax$ is $L_0$-derivable from $T$ 
if and only if $T_\mymax=T+F(T_\mymaxz)$. 
A theory $T$ of $L$ is Hilbert-Post complete with respect to $L_0$
if and only if it is consistent 
and every theory $T'$ of~$L$ containing $T$ 
is such that $T'=T+F(G(T'))$. 
\end{lemma}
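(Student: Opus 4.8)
The plan is to prove the two assertions of Lemma~\ref{lemm:hpc-rel} by unfolding Definition~\ref{defi:hpc-rel} and exploiting the elementary properties of the Galois connection $F\dashv G$ recalled just before it, namely monotonicity of $F$ and $G$, the unit inequality $T_0\subseteq G(F(T_0))$, the counit inequality $F(G(T))\subseteq T$, and the defining adjunction $F(T_0)\subseteq T \iff T_0\subseteq G(T)$.

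First I would establish the characterization of $L_0$-derivability. For the ``if'' direction, if $T'=T+F(G(T'))$, then taking $T'_0:=G(T')$ exhibits $T'$ in the required form $T+F(T'_0)$, so $T'$ is $L_0$-derivable from $T$. For the ``only if'' direction, suppose $T'=T+F(T'_0)$ for some theory $T'_0$ of $L_0$. On the one hand, from $T'_0\subseteq G(F(T'_0))\subseteq G(T+F(T'_0))=G(T')$ and monotonicity of $F$ we get $F(T'_0)\subseteq F(G(T'))$, hence $T+F(T'_0)\subseteq T+F(G(T'))$. On the other hand, $F(G(T'))\subseteq T'$ by the counit inequality, and $T\subseteq T'$ since $T'=T+F(T'_0)$ contains $T$; therefore $T+F(G(T'))\subseteq T'$. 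Combining the two inclusions with $T'=T+F(T'_0)$ yields $T'=T+F(G(T'))$. The special case for $T_\mymax$ follows immediately once one notes $G(T_\mymax)=T_\mymaxz$, which was recorded in the excerpt.

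Then the completeness reformulation is a direct consequence: by Definition~\ref{defi:hpc-rel}, $T$ is Hilbert-Post complete with respect to $L_0$ iff $T$ is consistent and every theory $T'\supseteq T$ of $L$ is $L_0$-derivable from $T$; substituting the characterization just proved, this is exactly: $T$ is consistent and every theory $T'\supseteq T$ satisfies $T'=T+F(G(T'))$. This step is purely formal once the first part is in hand.

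I do not expect a genuine obstacle here; the only point requiring a little care is the direction $T+F(T'_0)\subseteq T+F(G(T'))$, where one must chain the unit inequality with the monotonicity of $G$ applied to the inclusion $F(T'_0)\subseteq T+F(T'_0)=T'$ to obtain $T'_0\subseteq G(T')$, and then apply monotonicity of $F$. Everything else is bookkeeping with inclusions of theories. The Coq formalization presumably mirrors exactly this chain of (in)equalities.
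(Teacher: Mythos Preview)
Your proposal is correct and follows essentially the same route as the paper: both directions of the characterization are obtained by combining the counit inequality $F(G(T'))\subseteq T'$ with $T\subseteq T'$ for one inclusion, and the chain $T'_0\subseteq G(F(T'_0))\subseteq G(T+F(T'_0))=G(T')$ followed by monotonicity of $F$ for the other; the special case and the completeness reformulation are handled identically.
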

\begin{proof}
Clearly, if $T'=T+F(G(T'))$ then $T'$ is $L_0$-derivable from $T$. 
So, let $T'_0$ be a theory of~$L_0$ such that $T'=T+F(T'_0)$,
and let us prove that $T'=T+F(G(T'))$. 
For each theory $T'$ we know that $F(G(T')) \subseteq T'$;
since here $T \subseteq T'$ we get $T+F(G(T')) \subseteq T'$. 
Conversely, 
for each theory $T'_0$ we know that $T'_0 \subseteq G(F(T'_0))$ 
and that $G(F(T'_0)) \subseteq  G(T)+ G(F(T'_0)) \subseteq  G(T+F(T'_0)) $, 
so that $T'_0 \subseteq G(T+F(T'_0)) $;  
since here $T'=T+F(T'_0)$ we get first $T'_0 \subseteq G(T')$ 
and then $T'\subseteq T+F(G(T')) $.
Then, the result for $T_\mymax$ comes from the fact that $G(T_\mymax)=T_\mymaxz $.
The last point follows immediately.  
\end{proof}

\begin{theorem} 
\label{theo:hpc}
Let $T_0$ be a theory of $L_0$ and $T=F(T_0)$. 
If $T_0$ is Hilbert-Post complete (in $L_0$) 
and $T$ is Hilbert-Post complete with respect to $L_0$, 
then $T$ is Hilbert-Post complete (in $L$). 
\end{theorem}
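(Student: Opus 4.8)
The plan is to prove the two defining conditions of Hilbert-Post completeness for $T$ in $L$: consistency, and the property that every theory of $L$ containing $T$ is either $T$ itself or $T_\mymax$. I would work throughout with the Galois connection $F\dashv G$ and the reformulation of relative completeness given by Lemma~\ref{lemm:hpc-rel}, since that lemma already converts ``$L_0$-derivable from $T$'' into the concrete equation $T'=T+F(G(T'))$.

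First I would dispatch consistency. Since $T$ is assumed Hilbert-Post complete with respect to $L_0$, it is consistent by definition, so $T\ne T_\mymax$; this is immediate and needs no argument beyond unfolding Definition~\ref{defi:hpc-rel}.

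Next, the heart of the proof. Let $T'$ be any theory of $L$ with $T\subseteq T'$; I must show $T'=T$ or $T'=T_\mymax$. Because $T$ is relatively complete with respect to $L_0$, Lemma~\ref{lemm:hpc-rel} gives $T'=T+F(G(T'))$. Now look at $G(T')$, a theory of $L_0$. Since $T=F(T_0)$ and $F\dashv G$, we have $T_0\subseteq G(F(T_0))=G(T)\subseteq G(T')$, so $G(T')$ is a theory of $L_0$ containing $T_0$. Apply the absolute Hilbert-Post completeness of $T_0$ in $L_0$: either $G(T')=T_0$ or $G(T')=T_\mymaxz$. In the first case, $T'=T+F(T_0)=T+T=T$ (using $T=F(T_0)$), so $T'=T$. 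In the second case, $T'=T+F(T_\mymaxz)$; but $T'$ being a theory of $L$ containing $T$ is in particular $L_0$-derivable from $T$, and by the special case in Lemma~\ref{lemm:hpc-rel} the theory $T+F(T_\mymaxz)$ is exactly what witnesses $T_\mymax$ being $L_0$-derivable from $T$ — more carefully, I would argue that $F(T_\mymaxz)$ is the largest thing $F$ can produce, so $T+F(T_\mymaxz)$ is the largest $L_0$-derivable theory, and combined with relative completeness telling us every theory above $T$ is $L_0$-derivable, this forces $T+F(T_\mymaxz)=T_\mymax$; hence $T'=T_\mymax$.

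The step I expect to be the main obstacle is the second case: showing that $G(T')=T_\mymaxz$ really does force $T'=T_\mymax$, rather than just $T'=T+F(T_\mymaxz)$. The subtlety is that a priori $T+F(T_\mymaxz)$ could be a proper consistent theory strictly between $T$ and $T_\mymax$, which would contradict nothing said so far unless we use relative completeness of $T$ in the right direction. The clean way around this is: since $T+F(T_\mymaxz)$ is itself a theory of $L$ containing $T$, relative completeness says it is $L_0$-derivable from $T$ and, by maximality of $F(T_\mymaxz)$ among images of $F$, it must equal the maximal $L_0$-derivable theory; and if that maximal $L_0$-derivable theory were consistent, then it would be a theory containing $T$ that is not $T$ (assuming $F(T_\mymaxz)\not\subseteq T$) yet is $L_0$-derivable, which is fine — so actually the decisive point is simpler: take $T'=T_\mymax$ itself, note $G(T_\mymax)=T_\mymaxz$, and apply Lemma~\ref{lemm:hpc-rel}'s special case to get $T_\mymax = T+F(T_\mymaxz)$ provided $T_\mymax$ is $L_0$-derivable from $T$, which holds because $T_\mymax$ is a theory of $L$ containing $T$ and $T$ is relatively complete. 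Once $T+F(T_\mymaxz)=T_\mymax$ is established, the case $G(T')=T_\mymaxz$ yields $T'=T+F(T_\mymaxz)=T_\mymax$ directly. I would make sure to present this ordering of steps so that the maximality fact is proven before it is used.
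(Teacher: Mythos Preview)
Your proposal is correct and follows essentially the same route as the paper: consistency from relative completeness, then for any $T'\supseteq T$ use Lemma~\ref{lemm:hpc-rel} to write $T'=T+F(G(T'))$, observe $T_0\subseteq G(T')$, apply absolute completeness of $T_0$ to split into $G(T')=T_0$ or $G(T')=T_\mymaxz$, and in the second case invoke the special case of Lemma~\ref{lemm:hpc-rel} (with $T_\mymax$ being $L_0$-derivable from $T$ by relative completeness) to get $T+F(T_\mymaxz)=T_\mymax$. Your final paragraph lands on exactly the paper's argument for the second case; just trim the exploratory detour in the middle and present the clean version directly.
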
 

\begin{proof} 
Since $T$ is complete with respect to $L_0$, it is consistent. 
Since $T=F(T_0)$ we have $T_0 \subseteq G(T)$. 
Let $T'$ be a theory such that $T\subseteq T'$.  
Since $T$ is complete with respect to $L_0$, 
by Lemma~\ref{lemm:hpc-rel} we have $T'=T+F(T'_0)$ where $T'_0=G(T')$.
Since $T\subseteq T'$, $T_0 \subseteq G(T)$ and $T'_0=G(T')$, 
we get $T_0 \subseteq T'_0$. 
Thus, since $T_0$ is complete, either $T'_0=T_0$ or $T'_0=T_\mymaxz$;
let us check that then either $T'=T$ or $T'=T_\mymax$. 
If $T'_0=T_0$ then $F(T'_0)=F(T_0)=T$, so that $T'=T+F(T'_0)=T$. 
If $T'_0=T_\mymaxz$ then $F(T'_0)=F(T_\mymaxz)$; 
since $T$ is complete with respect to $L_0$, 
the theory $T_\mymax$ is $L_0$-derivable from $T$, 
which implies (by Lemma~\ref{lemm:hpc-rel}) that 
$T_\mymax=T+F(T_\mymaxz)=T'$. 
\end{proof} 

\begin{proposition} 
\label{prop:hpc-compose}
Let $L_1$ be an intermediate logic between $L_0$ and $L$, 
let $F_1\dashv G_1$ and $F_2\dashv G_2$ be the Galois connections
associated to the extensions $L_1$ of $L_0$ and $L$ of $L_1$, 
respectively. 
Let $T_1=F_1(T_0)$. 
If $T_1$ is Hilbert-Post complete with respect to $L_0$ 
and $T$ is Hilbert-Post complete with respect to $L_1$ 
then $T$ is Hilbert-Post complete with respect to $L_0$. 
\end{proposition}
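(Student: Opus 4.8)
The plan is to reduce everything to the characterisation of relative completeness given by Lemma~\ref{lemm:hpc-rel}, together with the observation that the Galois connection $F\dashv G$ for $L$ over $L_0$ factors through the intermediate one. First I would record the routine facts that the two extensions $L_0\subseteq L_1$ and $L_1\subseteq L$ compose to the extension $L_0\subseteq L$, and that correspondingly $F=F_2\circ F_1$ and $G=G_1\circ G_2$: indeed $G(T)$ collects the $L_0$-formulas of $T$, which are exactly the $L_0$-formulas among the $L_1$-formulas of $T$, and dually $F(T_0)$ is the $L$-theory generated by $T_0$, which is the $L$-theory generated by the $L_1$-theory generated by $T_0$. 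Since $F_1$ and $F_2$ are left adjoints between the (complete) lattices of theories, they preserve joins, so $F_2(A+B)=F_2(A)+F_2(B)$; and the unit inclusions such as $T_1\subseteq G_2(F_2(T_1))$ are available. I also use the standing convention of this section that $T=F(T_0)$, hence $T=F_2(T_1)$.

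Consistency of $T$ is immediate because $T$ is Hilbert-Post complete with respect to $L_1$. So, by Lemma~\ref{lemm:hpc-rel} applied to the extension $L_0\subseteq L$, it remains to show that every theory $T'$ of $L$ with $T\subseteq T'$ satisfies $T'=T+F(G(T'))$. I fix such a $T'$ and perform a two-step descent.

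Step one uses completeness of $T$ with respect to $L_1$: by Lemma~\ref{lemm:hpc-rel} applied now to $L_1\subseteq L$, we get $T'=T+F_2(G_2(T'))$. Step two uses completeness of $T_1$ with respect to $L_0$: writing $S=G_2(T')$, a theory of $L_1$, monotonicity of $G_2$ together with $T=F_2(T_1)$ gives $T_1\subseteq G_2(F_2(T_1))=G_2(T)\subseteq G_2(T')=S$, so Lemma~\ref{lemm:hpc-rel} applied to $L_0\subseteq L_1$ yields $S=T_1+F_1(G_1(S))$. Applying $F_2$, and using that it preserves joins, that $F_2\circ F_1=F$ and that $F_2(T_1)=T$, I obtain $F_2(S)=T+F(G_1(S))$. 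Substituting back into step one and absorbing the extra copy of $T$, I get $T'=T+F_2(S)=T+(T+F(G_1(S)))=T+F(G_1(G_2(T')))=T+F(G(T'))$, which is exactly what Lemma~\ref{lemm:hpc-rel} demands; hence $T$ is Hilbert-Post complete with respect to $L_0$.

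The only genuinely delicate point I anticipate is the bookkeeping: keeping straight which instance of Lemma~\ref{lemm:hpc-rel} is applied to which of the three extensions, and using $F=F_2\circ F_1$ and $G=G_1\circ G_2$ consistently. In particular, the convention $T=F(T_0)$ is precisely what makes the inclusion $T_1\subseteq G_2(T')$ hold, and without it the descent would stall; once these identifications are made, the remaining computation is the short chase with joins and unit inclusions displayed above, and presents no further obstacle.
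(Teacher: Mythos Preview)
Your proof is correct and follows the same idea the paper gestures at: the paper's proof is literally the one line ``This is an easy consequence of the fact that $F=F_2\circ F_1$,'' and you have filled in exactly the chase through Lemma~\ref{lemm:hpc-rel} that this sentence summarises. Your observation that the proposition tacitly relies on the convention $T=F(T_0)$ (inherited from Theorem~\ref{theo:hpc} just above) is apt and worth making explicit, since without $F_2(T_1)\subseteq T$ the inclusion $T_1\subseteq G_2(T')$ indeed need not hold.
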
 

\begin{proof} 
This is an easy consequence of the fact that $F=F_2\circ F_1$.
\end{proof}

Corollary~\ref{coro:hpc-equations} provides a characterization 
of relative Hilbert-Post completeness which is used 
in the next Sections and in the Coq implementation. 

\begin{definition} 
\label{defi:hpc-equations} 
For each set $E$ of formulas let $\Th(E)$ be the theory generated by $E$; 
and when $E=\{e\}$ let $\Th(e)=\Th(\{e\})$. 
Then two sets $E_1$, $E_2$ of formulas are \emph{$T$-equivalent} 
if $T+\Th(E_1)=T+\Th(E_2)$;
and a formula $e$ of~$L$ is \emph{$L_0$-derivable} 
from a theory $T$ of~$L$ if $\{e\}$ is $T$-equivalent to $E_0$ 
for some set $E_0$ of formulas of~$L_0$. 
\end{definition}

\begin{proposition}
\label{prop:hpc-equations}
Let $T$ be a theory of $L$.  
Each theory $T'$ of $L$ containing $T$ is $L_0$-derivable from $T$ 
if and only if each formula $e$ in $L$ is $L_0$-derivable from $T$. 
\end{proposition}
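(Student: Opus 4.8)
The plan is to unwind both sides through Lemma~\ref{lemm:hpc-rel}, which already tells us that a theory $T'\supseteq T$ is $L_0$-derivable from $T$ precisely when $T'=T+F(G(T'))$. So the "only if" direction is essentially a matter of specialising the hypothesis to the theory $T'=T+\Th(e)$ for a given formula $e$: if every such $T'$ is $L_0$-derivable from $T$, then $T+\Th(e)=T+F(G(T+\Th(e)))$, and taking $E_0$ to be $G(T+\Th(e))$ (a set of formulas of $L_0$) exhibits $\{e\}$ as $T$-equivalent to $E_0$, since $T+\Th(E_0)=T+F(\Th(E_0))$ — here I would need the small bookkeeping fact that $\Th$ composed with $F$ on a set of $L_0$-formulas agrees with $F\circ\Th$, i.e. that the theory of $L$ generated by a set of $L_0$-formulas is $F$ of the theory of $L_0$ they generate. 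That identity is immediate from the definition of $F$ as "generate in $L$".

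For the "if" direction, I would take an arbitrary theory $T'$ of $L$ with $T\subseteq T'$ and show $T'$ is $L_0$-derivable from $T$, using the hypothesis that each individual formula is. The idea is that $T'$ is generated over $T$ by the set $E=T'$ itself (or by any generating set of $T'$ modulo $T$); for each $e\in E$ the hypothesis gives a set $E_0^e$ of $L_0$-formulas with $T+\Th(e)=T+\Th(E_0^e)$. Collecting $E_0:=\bigcup_{e\in E}E_0^e$, I would argue that $T+\Th(E)=T+\Th(E_0)$: the inclusion $\subseteq$ holds because each $e$ lies in $T+\Th(E_0^e)\subseteq T+\Th(E_0)$, and $\supseteq$ holds because each $E_0^e\subseteq \Th(E_0^e)\subseteq T+\Th(e)\subseteq T+\Th(E)$. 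Since $T'=T+\Th(E)=T+\Th(E_0)=T+F(\Th_0(E_0))$ with $\Th_0(E_0)$ a theory of $L_0$, this is exactly the definition of $L_0$-derivability.

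The main obstacle is the passage from "one formula at a time" to "a whole theory at once" — i.e. making precise that $T$-equivalence of singletons with sets of $L_0$-formulas is stable under taking arbitrary unions, and that generated theories commute with such unions ($\Th(\bigcup_i E_i)=\sum_i\Th(E_i)$, and $T+\Th(\bigcup_i E_i)=\sum_i (T+\Th(E_i))$ as a supremum in the lattice of theories). These are routine properties of the closure operator $\Th$, but they are where the argument actually has content; everything else is formal manipulation of the Galois connection and of Lemma~\ref{lemm:hpc-rel}. I would also keep in mind the degenerate cases (the empty theory, $T'=T$) where $E$ or $E_0$ may be taken empty, which are handled by the remark following Definition~\ref{defi:hpc-rel} that $T$ is always $L_0$-derivable from itself.
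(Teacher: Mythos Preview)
Your proposal is correct. The ``only if'' direction is essentially the paper's argument: specialise to $T'=T+\Th(e)$ and read off a witness set of $L_0$-formulas (the paper takes any $T'_0$ with $T'=T+F(T'_0)$; you take the specific $G(T')$ supplied by Lemma~\ref{lemm:hpc-rel}, which is the same idea).

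The ``if'' direction is where you diverge. You build an explicit witness $E_0=\bigcup_{e\in T'}E_0^e$ and verify $T+\Th(T')=T+\Th(E_0)$ using the closure/union bookkeeping you flag. The paper instead fixes the canonical candidate $T''=T+F(G(T'))$ from Lemma~\ref{lemm:hpc-rel} and shows each $e\in T'$ already lies in $T''$: since $T+\Th(e)=T+\Th(E_0^e)\subseteq T'$, the formulas in $E_0^e$ are $L_0$-formulas provable in $T'$, hence $E_0^e\subseteq G(T')$, hence $T+\Th(E_0^e)\subseteq T+F(G(T'))=T''$, so $e\in T''$. This absorbs all your union machinery into the single observation $E_0^e\subseteq G(T')$. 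Your route is more constructive (it exhibits a concrete $T'_0$), while the paper's is shorter and avoids having to argue that $\Th$ commutes with arbitrary unions.
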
 

\begin{proof}    
Let us assume that each theory $T'$ of $L$ containing $T$ 
is $L_0$-derivable from $T$.
Let $e$ be a formula in $L$, let $T'=T+\Th(e)$, 
and let $T'_0$ be a theory of $L_0$ such that $T'=T+F(T'_0)$. 
The definition of $\Th(-)$ is such that $\Th(T'_0)=F(T'_0)$, 
so that we get $T+\Th(e)=T+\Th(E_0)$ where $E_0= T'_0$.
Conversely, 
let us assume that each formula $e$ in $L$ is $L_0$-derivable from $T$.
Let $T'$ be a theory containing $T$.
Let $T''=T+F(G(T'))$, so that $T\subseteq T''\subseteq T'$ 
(because $F(G(T'))\subseteq T'$ for any $T'$).  
Let us consider an arbitrary formula $e$ in $T'$, 
by assumption there is a set $E_0$ of formulas of $L_0$ such that 
$T+\Th(e)=T+\Th(E_0)$. 
Since $e$ is in $T'$ and $T\subseteq T'$ we have $T+\Th(e)\subseteq T'$,
so that $T+\Th(E_0)\subseteq T'$. 
It follows that $E_0$ is a set of theorems of $T'$ which are formulas of $L_0$,
which means that $E_0\subseteq G(T')$, 
and consequently $\Th(E_0)\subseteq F(G(T'))$, 
so that $T+\Th(E_0)\subseteq T''$.
Since $T+\Th(e)=T+\Th(E_0)$ we get $e\in T''$.
We have proved that $T'=T''$, so that $T'$ is $L_0$-derivable from~$T$. 
\end{proof}

\begin{corollary}
\label{coro:hpc-equations}
A theory $T$ of $L$ is Hilbert-Post complete with respect to $L_0$
if and only if it is consistent and for each formula $e$ of $L$ 
there is a set $E_0$ of formulas of~$L_0$ such that 
$\{e\}$ is $T$-equivalent to $E_0$.
\end{corollary}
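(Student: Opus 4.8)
The plan is to obtain the corollary by directly chaining three earlier facts: the definition of relative Hilbert-Post completeness (Definition~\ref{defi:hpc-rel}), the reduction of ``every extending theory is $L_0$-derivable'' to ``every formula is $L_0$-derivable'' (Proposition~\ref{prop:hpc-equations}), and the unfolding of ``a formula is $L_0$-derivable from $T$'' in terms of $T$-equivalence to a set of $L_0$-formulas (Definition~\ref{defi:hpc-equations}). No new construction is needed; the argument is purely a matter of assembling these pieces.

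First I would recall that, by Definition~\ref{defi:hpc-rel}, $T$ is Hilbert-Post complete with respect to $L_0$ precisely when $T$ is consistent and every theory $T'$ of $L$ with $T\subseteq T'$ is $L_0$-derivable from $T$. The consistency clause appears identically on both sides of the equivalence we want to prove, so it suffices to show that the condition ``every theory $T'$ of $L$ containing $T$ is $L_0$-derivable from $T$'' is equivalent to the condition ``for each formula $e$ of $L$ there is a set $E_0$ of formulas of $L_0$ such that $\{e\}$ is $T$-equivalent to $E_0$''.

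Next I would invoke Proposition~\ref{prop:hpc-equations}, which states exactly that the first of these two conditions holds if and only if each formula $e$ in $L$ is $L_0$-derivable from $T$. Then, expanding Definition~\ref{defi:hpc-equations}, the statement ``$e$ is $L_0$-derivable from $T$'' means by definition that $\{e\}$ is $T$-equivalent to $E_0$ for some set $E_0$ of formulas of $L_0$; quantifying over all $e$ yields verbatim the second condition above. Combining this with the consistency clause gives the corollary.

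I do not expect any real obstacle: all the substantive work has already been carried out in Proposition~\ref{prop:hpc-equations}, whose forward direction exploits $\Th(T'_0)=F(T'_0)$ and whose converse forms the theory $T+F(G(T'))$ and checks it coincides with $T'$. The only point requiring a little care is bookkeeping of quantifiers, namely keeping the existential ``for some $E_0$'' of Definition~\ref{defi:hpc-equations} nested inside the universal ``for each formula $e$'' when the condition is restated; once that is respected, the equivalence is immediate.
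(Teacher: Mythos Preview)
Your proposal is correct and matches the paper's approach: the paper states Corollary~\ref{coro:hpc-equations} without an explicit proof, treating it as an immediate consequence of Definition~\ref{defi:hpc-rel}, Proposition~\ref{prop:hpc-equations}, and Definition~\ref{defi:hpc-equations}, which is precisely the chain you describe.
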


\section{Completeness for exceptions}
\label{sec:exc}

Exception handling is provided by most modern programming
languages. It allows to deal with anomalous or exceptional events
which require special processing. E.g., one can easily
and simultaneously compute dynamic evaluation in exact linear algebra
using exceptions \cite{DDER14-exc}. 
There, we proposed to deal with exceptions as a decorated effect: 
a term $f:X\to Y$ is not interpreted as a function $f: X\to Y$ unless it is
pure. A term which may raise an exception is instead 
interpreted as a function $f: X\to Y + E$ 
where ``+'' is the disjoint union operator and $E$ is the set of exceptions. 
In this section, we prove the relative Hilbert-Post completeness 
of the decorated theory of exceptions in Theorem~\ref{theo:exc-complete}.

As in \cite{DDER14-exc}, decorated logics for exceptions 
are obtained from equational logics by classifying terms. 
Terms are classified as \emph{pure} terms or \emph{propagators}, 
which is expressed by adding a \emph{decoration} 
or superscript, respectively $\pure$ or $\ppg$;
decoration and type information about terms 
may be omitted when they are clear from the context
or when they do not matter. 
All terms must propagate exceptions, 
and propagators are allowed to raise an exception while pure terms are not. 
The fact of catching exceptions is hidden: 
it is embedded into the $\trycatch$ construction, as explained below. 
In Section~\ref{sec:excore}
we consider a translation 
of the $\trycatch$ construction in a more elementary language 
where some terms are \emph{catchers}, 
which means that they may recover from an exception, 
i.e., they do not have to propagate exceptions.

Let us describe informally a decorated theory for exceptions 
and its intended model. 
Each type $X$ is interpreted as a set, still denoted $X$.
The intended model is described with respect to a set $E$   
called the \emph{set of exceptions}, which does not appear in the syntax.  
A pure term $u^\pure:X\to Y$ is interpreted as a function $u:X\to Y$ 
and a propagator $a^\ppg:X\to Y$ as a function $a:X\to Y+E$;
equations are interpreted as equalities of functions. 
There is an obvious conversion from pure terms to propagators, 
which allows to consider all terms as propagators whenever needed;
if a propagator $a^\ppg:X\to Y$ ``is'' a pure term, 
in the sense that it has been obtained by conversion from a pure term, 
then the function $a:X\to Y+E$ is such that $a(x)\in Y$ for each $x\in
X$. 
This means that exceptions 
are always propagated: the interpretation of 
$(b\circ a)^\ppg:X\to Z$ where $a^\ppg:X\to Y$ and $b^\ppg:Y\to Z$ 
is such that $(b\circ a)(x)=b(a(x))$ when $a(x)$ is not an exception
and $(b\circ a)(x)=e$ when $a(x)$ is the exception $e$
(more precisely, the composition of propagators is the Kleisli composition
associated to the monad $X+E$ \cite[\S~1]{Moggi91}).
Then, exceptions may be classified according to their \emph{name},  
as in \cite{DDER14-exc}. 
Here, in order to focus on the main features of the 
proof of completeness, we assume that there is only one exception name. 
Each exception is built by \emph{encapsulating} a parameter.
Let $P$ denote the type of parameters for exceptions. 
The fundamental operations for raising exceptions are the propagators 
$\throw_Y^\ppg :P\to Y$ for each type $Y$: 
this operation throws an exception with a parameter $p$ of type $P$ 
and pretends that this exception has type $Y$.
The interpretation of the term $\throw_Y^\ppg :P\to Y$ 
is a function $\throw_Y :P\to Y+E$ such that $\throw_Y(p) \in E$
for each $p\in P$. 
The fundamental operations for handling exceptions are the propagators 
$(\try (a) \catch (b))^\ppg:X\to Y$ 
for each terms $a:X\to Y$ and $b:P\to Y$:
this operation first runs $a$ until an exception with parameter $p$ 
is raised (if any), then, if such an exception has been raised, 
it runs $b(p)$.  
The interpretation of the term $(\try (a) \catch (b))^\ppg:X\to Y$ 
is a function $\try (a) \catch (b) :X\to Y+E$ such that 
$(\try (a) \catch (b))(x)=a(x)$ when $a$ is pure and 
$(\try (a) \catch (b))(x)=b(p)$ when $a(x)$ throws an exception 
with parameter $p$.

More precisely, first the definition of the 
\emph{monadic equational logic} $L_\eqn$ is recalled in Fig.~\ref{fig:eqn}, 
(as in~\cite{Moggi91}, this terminology might be misleading: 
the logic is called \emph{monadic} because all its operations are have 
exactly one argument, this is unrelated to the use of the \emph{monad} 
of exceptions).  
\begin{figure}[ht]
\vspace{-5pt}
\begin{framed}
\small
\renewcommand{\arraystretch}{1.5}
\begin{tabular}{l}
Terms are closed under composition: \\  
$u_k\circ \dots \circ u_1:X_0\sto X_k$ 
for each $(u_i:X_{i-1}\sto X_i)_{1\leq i\leq k}$,
and $\id_X:X\sto X$ when $k=0$ \\ 
Rules: 
\quad (equiv) \squad
  $\dfrac{u}{u \eqs u} \quad
  \dfrac{u \eqs v}{v \eqs u}  \quad
  \dfrac{u \eqs v \squad v \eqs w}{u \eqs w}$ \\ 
\quad (subs) \; 
  $\dfrac{u\colon X\to Y \squad v_1 \eqs v_2 \colon Y\to Z}
    {v_1 \circ u \eqs v_2\circ u } $ \squad 
(repl) \; 
  $ \dfrac{v_1\eqs v_2\colon X\to Y \squad w\colon Y\to Z}
    {w\circ v_1 \eqs w\circ v_2} $ \\ 
Empty type $\empt$ with terms $\copa_Y:\empt\to Y$ and rule: 
\quad (initial) \; 
  $ \dfrac{u\colon \empt\to Y }
    {u \eqs \copa_Y} $   \\ 
\end{tabular}
\renewcommand{\arraystretch}{1}
\normalsize 
\vspace{-2mm}
\end{framed}
\vspace{-5pt}
\caption{Monadic equational logic $L_\eqn$ (with empty type)} 
\label{fig:eqn} 
\end{figure}

A monadic equational logic is made of types, terms and operations, 
where all operations are unary, so that terms are simply paths. 
This constraint on arity  
will make it easier to focus on the completeness issue.
For the same reason, we also assume that there is an \emph{empty type}
$\empt$, which is defined as an \emph{initial object}:  
for each $Y$ there is a unique term $\copa_Y:\empt\to Y$ 
and each term $u^\pure:Y\to\empt$ is the inverse of $\copa_Y^\pure$. 
In the intended model, $\empt$ is interpreted as the empty set.

Then, the monadic equational logic $L_\eqn$ is extended to form the
\emph{decorated logic for exceptions} $L_\exc$ 
by applying the rules in Fig.~\ref{fig:exc}, 
with the following intended meaning: 
\begin{itemize}
\item (initial$_1$): the term $\![\,]_Y\!\!$ is 
unique as a propagator, not only as a pure term.
\item (propagate): exceptions are always propagated.
\item (recover): the parameter used for throwing an exception
  may be recovered.
\item (try): equations are preserved by the exceptions mechanism. 
\item (try$_0$): pure code inside $\try$ never triggers 
  the code inside $\catch$.
\item (try$_1$): code inside $\catch$ is executed when 
  an exception is thrown inside~$\try$. 
\end{itemize}
\begin{figure}[ht]
\vspace{-5pt}
\begin{framed}
\small
\renewcommand{\arraystretch}{1.5}
\begin{tabular}{l}
Pure part: the logic $L_\eqn$ with a distinguished type $P$ \\ 
Decorated terms: $\throw_Y^\ppg:P\to Y$ for each type $Y$, \\ 
\quad $(\try (a) \catch (b))^\ppg:X\to Y$ for each $a^\ppg:X\to Y$
and $b^\ppg:P\to Y$, and \\ 
\quad $(a_k\circ \dots \circ a_1)^{(\max(d_1,...,d_k))}:X_0\to X_k$ 
for each $(a_i^{(d_i)}:X_{i-1}\to X_i)_{1\leq i\leq k}$ \\  
\quad with conversion from $u^\pure:X\to Y$ to $u^\ppg:X\to Y$ \\ 
Rules:  \\ 
\quad (equiv), (subs), (repl) for all decorations  \qquad  
(initial$_1$) \; 
  $ \dfrac{a^\ppg\colon \empt\to Y }
    {a \eqs \copa_Y} $ \\ 
\quad (recover) \squad 
  $\dfrac{u_1^\pure,u_2^\pure:X\to P \squad
  \throw_Y\circ u_1 \eqs \throw_Y\circ u_2}{u_1 \eqs u_2} $ \\ 
\quad (propagate)
  $\dfrac{a^\ppg:X\to Y}{a\circ \throw_X \eqs\throw_Y} $ 
\quad (try)
  $\dfrac{a_1^\ppg \eqs a_2^\ppg\!:\!X\to Y \squad b^\ppg\!:\!P\to Y}{
  \try (a_1) \catch (b) \eqs \try (a_2) \catch (b) } $ \\
\quad (try$_0$) \squad 
  $\dfrac{u^\pure\!:\!X\to Y \squad b^\ppg\!:\!P\to Y}{
  \try (u) \catch (b) \eqs u} $  \quad 
(try$_1$) \squad 
  $\dfrac{u^\pure\!:\!X\to P \squad b^\ppg\!:\!P\to Y}{
  \try (\throw_Y \!\circ u) \catch (b) \eqs b\circ u} $ \\
\end{tabular}
\renewcommand{\arraystretch}{1}
\normalsize 
\vspace{-2mm}
\end{framed}
\vspace{-5pt}
\caption{Decorated logic for exceptions $L_\exc$} 
\label{fig:exc} 
\end{figure}
The \emph{theory of exceptions} $T_\exc$ is the theory of $L_\exc$ 
generated from some arbitrary consistent theory $T_\eqn$ of $L_\eqn$; 
with the notations of Section~\ref{sec:hpc}, $T_\exc=F(T_\eqn)$.
The soundness of the intended model follows: see 
\cite[\S 5.1]{DDER14-exc} and~\cite{DDFR12-dual},
which are based on the description
of exceptions in Java~\cite[Ch. 14]{java} or in C++~\cite[\S 15]{cpp}. 

\begin{example}\label{ex:trycatch}
Using the naturals for $P$ and the
successor and predecessor functions (resp. denoted $\successor$ and
$\predecessor$) we can prove, e.g., that
$\try(\successor(\throw~3))\catch(\predecessor)$ is equivalent to $2$.
Indeed, first the rule (propagate) shows that 
$\successor(\throw~3))\equiv\throw~3$,
then the rules (try) and (try$_1$) rewrite the given term into 
$\predecessor(3)$.
\end{example}

Now, in order to prove the completeness of the 
decorated theory for exceptions, we follow a classical method (see,
e.g., \cite[Prop 2.37~\&~2.40]{Pretnar10}): 
we first determine canonical forms in Proposition~\ref{prop:exc-canonical}, 
then we study the equations between terms in canonical form 
in Proposition~\ref{prop:exc-equations}.

\begin{proposition} 
\label{prop:exc-canonical} 
For each $a^\ppg\!:\!X\!\to\! Y$, 
either there is a pure term $u^\pure\!:\!X\!\to\! Y$ such that $a\!\eqs\! u$
or there is a pure term $u^\pure\!:\!X\!\to\! P$ such that 
$a\!\eqs\! \throw_Y \!\circ\! u$. 
\end{proposition}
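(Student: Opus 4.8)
The plan is to prove this by structural induction on the propagator $a^\ppg:X\to Y$, following the grammar of decorated terms given in Figure~\ref{fig:exc}. A term is either (i) a conversion of a pure term $u^\pure:X\to Y$, (ii) a $\throw_Y^\ppg:P\to Y$, (iii) a $\trycatch$ block $(\try(a')\catch(b))^\ppg$, or (iv) a composite $a_k\circ\dots\circ a_1$. Cases (i) and (ii) are immediate: in case (i) take the pure term $u$ itself, and in case (ii) take $u=\id_P$, since $\throw_Y\eqs\throw_Y\circ\id_P$. So the real content is in the two inductive cases.

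For composition it suffices by associativity to treat $a = a_2\circ a_1$ with $a_1^\ppg:X\to Y$ and $a_2^\ppg:Y\to Z$. Apply the induction hypothesis to $a_1$. If $a_1\eqs u_1$ for a pure $u_1^\pure:X\to Y$, then $a\eqs a_2\circ u_1$, and now apply the induction hypothesis to $a_2\circ u_1$ — but this is not quite a subterm, so instead I would first apply the induction hypothesis to $a_2$ directly: if $a_2\eqs u_2$ pure then $a\eqs u_2\circ u_1$ is pure; if $a_2\eqs\throw_Z\circ u_2$ with $u_2^\pure:Y\to P$, then $a\eqs\throw_Z\circ u_2\circ u_1$ with $u_2\circ u_1$ pure, done. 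If instead $a_1\eqs\throw_Y\circ u_1$ with $u_1^\pure:X\to P$, then using (subs)/(repl) and the (propagate) rule applied to $a_2$, namely $a_2\circ\throw_Y\eqs\throw_Z$, we get $a\eqs a_2\circ\throw_Y\circ u_1\eqs\throw_Z\circ u_1$, which is the desired form. So the composition case goes through by analysing the head $a_2$ and, when it is pure, the tail; the (propagate) rule is what makes the "thrown" subcase collapse.

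For the $\trycatch$ case, consider $a=(\try(a')\catch(b))^\ppg:X\to Y$ with $a'^\ppg:X\to Y$ and $b^\ppg:P\to Y$. Apply the induction hypothesis to $a'$. If $a'\eqs u'$ for a pure $u'^\pure:X\to Y$, then by (try) we have $a\eqs\try(u')\catch(b)$ and by (try$_0$) this equals $u'$, which is pure — done. If $a'\eqs\throw_Y\circ u'$ with $u'^\pure:X\to P$, then by (try) we have $a\eqs\try(\throw_Y\circ u')\catch(b)$, and by (try$_1$) this equals $b\circ u'$. Now $b\circ u'$ is a composite propagator, strictly built from $b$ (a subterm of $a$) and a pure term, so I can invoke the induction hypothesis on it (or, more cleanly, fold the composition case into a single simultaneous induction). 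That yields either a pure $u^\pure:X\to Y$ with $b\circ u'\eqs u$, or a pure $u^\pure:X\to P$ with $b\circ u'\eqs\throw_Y\circ u$; either way $a$ has a canonical form.

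The main subtlety — and the step I would be most careful about — is the well-foundedness of the induction when I pass from $a=(\try(a')\catch(b))$ to $b\circ u'$: $u'$ is newly produced by the induction hypothesis applied to $a'$, so $b\circ u'$ is not literally a subterm of $a$. The clean fix is to prove the proposition by induction on the structure of $a$ with a measure (e.g.\ the number of $\throw$ and $\trycatch$ constructors, or simply term size where $u'$ contributes nothing since it is pure), noting that $b$ is a strict subterm of $a$ and $u'$ is pure, so the pair $(b,u')$ is strictly smaller than $a$; alternatively, one proves the composition statement and the $\trycatch$ statement by a single simultaneous structural induction. Everything else is a routine application of (equiv), (subs), (repl), (propagate), (try), (try$_0$) and (try$_1$), with the decorations automatically compatible because all terms involved are propagators or pure terms converted to propagators.
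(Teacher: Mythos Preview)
Your proof is correct and uses the same ingredients as the paper's: structural induction, with (propagate) collapsing the thrown branch of a composition and (try), (try$_0$), (try$_1$) reducing a $\trycatch$ block according to the canonical form of its body. The organization differs slightly: the paper does not treat composition as a separate case but instead decomposes any non-pure $a$ directly as $b\circ\mathtt{op}\circ v$ with $v$ pure and $\mathtt{op}$ the rightmost non-pure constructor ($\throw$ or $\trycatch$), then recurses on $b$ (resp.\ on $b\circ d$) after simplifying $\mathtt{op}\circ v$. Your per-constructor case split and the paper's ``peel off the innermost effect'' decomposition amount to the same induction, and both face exactly the well-foundedness point you flag (the paper applies the hypothesis to $b\circ d$, which is not a literal subterm either); your suggested size measure handles it in both presentations.
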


\begin{proof}
The proof proceeds by structural induction. 
If $a$ is pure the result is obvious, otherwise 
$a$ can be written in a unique way as  
$a = b \circ \mathtt{op} \circ v$ where $v$ is pure, $\mathtt{op}$ 
is either $\throw_Z$ for some $Z$ or $\try(c)\catch(d)$ 
for some $c$ and $d$, and $b$ is the remaining part of $a$. 
If $a = b^\ppg \circ \throw_Z \circ v^\pure$, then by (propagate) 
$a \eqs \throw_Y \circ v^\pure$. 
Otherwise, $a = b^\ppg \circ (\try(c^\ppg)\catch(d^\ppg)) \circ v^\pure$, 
then by induction we consider two cases. 
  \begin{itemize}
  \item If $c \eqs w^\pure$ then by (try$_0$) 
  $a \eqs b^\ppg \circ w^\pure \circ v^\pure$ and 
  by induction we consider two subcases:  
    if $b \eqs t^\pure$ then $a \eqs (t \circ w \circ v)^\pure$ 
    and if $b \eqs \throw_Y\circ t^\pure$ then 
    $a \eqs \throw_Y \circ (t \circ w \circ v)^\pure$. 
    \item If $c \eqs \throw_Z \circ w^\pure$ then by (try$_1$) 
  $a \eqs b^\ppg \circ d^\ppg \circ w^\pure \circ v^\pure$ and 
  by induction we consider two subcases: 
    if $b \circ d \eqs t^\pure$ then $a \eqs (t \circ w \circ v)^\pure$ 
    and if $b \circ d \eqs \throw_Y \circ t^\pure$ then 
    $a \eqs \throw_Y\circ (t \circ w \circ v)^\pure$. 
  \end{itemize}
\end{proof}

Thanks to Proposition~\ref{prop:exc-canonical}, 
the study of  equations in the logic $L_\exc$
can be restricted to pure terms 
and to propagators of the form $\throw_Y \circ v$ where $v$ is pure. 

\begin{proposition}
\label{prop:exc-equations} 
%\label{pt:exc-equations-ppg-ppg} 
For all $v_1^\pure,v_2^\pure:X\to P$ 
let $a_1^\ppg = \throw_Y \circ v_1 :X\to Y$
and $a_2^\ppg = \throw_Y \circ v_2 :X\to Y$.
Then 
$ a_1^\ppg \eqs a_2^\ppg $ is $T_\exc$-equivalent to 
$v_1^\pure \eqs v_2^\pure $.
\end{proposition}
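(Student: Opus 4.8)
The plan is to unfold the definition of $T_\exc$-equivalence (Definition~\ref{defi:hpc-equations}): we must show $T_\exc+\Th(a_1\eqs a_2)=T_\exc+\Th(v_1\eqs v_2)$. Since $\Th(e)$ is the theory generated by the single equation $e$, this splits into two inclusions, and for each it is enough to derive the generating equation of one side from $T_\exc$ together with the generating equation of the other side, using the rules of $L_\exc$.

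For $T_\exc+\Th(a_1\eqs a_2)\subseteq T_\exc+\Th(v_1\eqs v_2)$, I would start from $v_1\eqs v_2:X\to P$ and apply the rule (repl) with $w=\throw_Y:P\to Y$ — this is legitimate because (repl) is available for all decorations in $L_\exc$ — obtaining $\throw_Y\circ v_1\eqs\throw_Y\circ v_2$, which is exactly $a_1\eqs a_2$. Hence $a_1\eqs a_2$ lies in $T_\exc+\Th(v_1\eqs v_2)$, giving the inclusion. For the reverse inclusion, I would start from $a_1\eqs a_2$, i.e.\ $\throw_Y\circ v_1\eqs\throw_Y\circ v_2$, and apply the rule (recover), whose side conditions $v_1^\pure,v_2^\pure:X\to P$ hold by hypothesis; this yields $v_1\eqs v_2$, so $v_1\eqs v_2$ lies in $T_\exc+\Th(a_1\eqs a_2)$.

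There is no genuine obstacle here, and in particular no induction is required — the structural induction was already carried out in Proposition~\ref{prop:exc-canonical}. The only point deserving attention is the verification of the decoration side-conditions when invoking (recover) (the two premises must be pure) and (repl) (the composed term $\throw_Y$ is a propagator, which is why one needs the ``all decorations'' version of the rule). Conceptually, this proposition is the crucial bridge: combined with the canonical forms of Proposition~\ref{prop:exc-canonical}, it reduces every equation of $L_\exc$ either to an equation between pure terms or to one of the form $\throw_Y\circ v_1\eqs\throw_Y\circ v_2$, and then back to the pure equation $v_1\eqs v_2$ — precisely the kind of reduction to a set of formulas of the sublogic $L_\eqn$ that Corollary~\ref{coro:hpc-equations} demands for relative Hilbert-Post completeness, which will be assembled in Theorem~\ref{theo:exc-complete}.
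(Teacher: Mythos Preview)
Your proposal is correct and follows essentially the same approach as the paper: the paper's proof is a two-line argument using (repl) for the forward direction (stated simply as ``Clearly, if $v_1\eqs v_2$ then $a_1\eqs a_2$'') and (recover) for the converse. Your version is just more explicit about unfolding the definition of $T_\exc$-equivalence and about the decoration side-conditions, and your closing commentary on the role of the proposition in Theorem~\ref{theo:exc-complete} is accurate but not part of the proof proper.
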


\begin{proof}
Clearly, if $v_1\eqs v_2$ then $a_1\eqs a_2 $. 
Conversely, if $a_1\eqs a_2 $, i.e., if 
$\throw_Y \circ v_1\eqs \throw_Y \circ v_2$, 
then by rule (recover) it follows that $v_1 \eqs v_2$. 
\end{proof}

In the intended model, for all $v_1^\pure:X\to P$ and $v_2^\pure: X\to Y$, 
it is impossible to have $\throw_Y(v_1(x))=v_2(x)$ for some $x\in X$, 
because $\throw_Y(v_1(x))$ is in the $E$ summand 
and $v_2(x)$ in the $Y$ summand of the disjoint union $Y+E$. 
This means that the functions $\throw_Y \circ v_1$ and $v_2$  
are distinct, as soon as their domain $X$ is a non-empty set. 
For this reason, it is sound to make the following 
Assumption~\ref{ass:exc-equations}. 

\begin{assumption}
\label{ass:exc-equations} 
In the logic $L_\exc$, 
the type of parameters $P$ is non-empty, and  
% \label{pt:exc-equations-ppg-pure}
for all $v_1^\pure:X\to P$ and $v_2^\pure: X\to Y$ with $X$ non-empty, 
let $a_1^\ppg = \throw_Y \circ v_1 :X\to Y$. 
Then 
$ a_1^\ppg \eqs v_2^\pure$ is $T_\exc$-equivalent to $T_\mymaxz$. 
\end{assumption}

\begin{theorem}
\label{theo:exc-complete}
Under Assumption~\ref{ass:exc-equations},
the theory of exceptions $T_\exc$ is Hilbert-Post complete
with respect to the pure sublogic $L_\eqn$ of $L_\exc$. 
\end{theorem}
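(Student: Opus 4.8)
The plan is to invoke Corollary~\ref{coro:hpc-equations}: it suffices to show that $T_\exc$ is consistent and that every formula of $L_\exc$ is $L_\eqn$-derivable from $T_\exc$, i.e.\ $T_\exc$-equivalent to some set $E_0$ of formulas of the pure logic $L_\eqn$. Consistency I would obtain from soundness of the intended model: taking the set of exceptions $E$ to be any non-empty set and using that $P$ is non-empty (Assumption~\ref{ass:exc-equations}), the pure term $\id_P$ and the propagator $\throw_P\colon P\to P$ are interpreted by distinct functions into $P+E$, so $\id_P\eqs\throw_P$ is not a theorem of $T_\exc$, whence $T_\exc\neq T_\mymax$.

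For the main point, a formula of $L_\exc$ is an equation $a\eqs b$ between parallel propagators $a^\ppg,b^\ppg\colon X\to Y$. By Proposition~\ref{prop:exc-canonical}, each side is provably equal, using only the rules of $L_\exc$ (hence already inside $T_\exc$), either to a pure term $X\to Y$ or to $\throw_Y$ precomposed with a pure term $X\to P$; so modulo $T_\exc$ the equation $a\eqs b$ reduces to an equation between two such canonical forms, and I would treat three cases. \emph{(i)} If both forms are pure, $a\eqs u_1^\pure$ and $b\eqs u_2^\pure$, then $\{a\eqs b\}$ is $T_\exc$-equivalent to the single $L_\eqn$-formula $u_1^\pure\eqs u_2^\pure$. \emph{(ii)} If both are of the form $\throw_Y\circ v_i$ with $v_i^\pure\colon X\to P$, then Proposition~\ref{prop:exc-equations} gives that $\{a\eqs b\}$ is $T_\exc$-equivalent to $v_1^\pure\eqs v_2^\pure$. \emph{(iii)} In the mixed case, say $a\eqs u^\pure$ and $b\eqs \throw_Y\circ v^\pure$, I would split according to whether $X$ is empty: if it is, rule (initial$_1$) makes both sides provably equal to $\copa_Y$, so $a\eqs b\in T_\exc$ and $\{a\eqs b\}$ is $T_\exc$-equivalent to $\emptyset$; if $X$ is non-empty, Assumption~\ref{ass:exc-equations} states precisely that $\{a\eqs b\}$ is $T_\exc$-equivalent to the maximal theory $T_\mymaxz$ of $L_\eqn$, and we may take $E_0=T_\mymaxz$ (viewed as a set of $L_\eqn$-formulas). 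In every case $E_0$ is a set of formulas of $L_\eqn$, so Corollary~\ref{coro:hpc-equations} yields the claim.

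Most of the work is already absorbed by Proposition~\ref{prop:exc-canonical}, which collapses every propagator to one of two normal forms; the remainder is bookkeeping with Corollary~\ref{coro:hpc-equations} and Proposition~\ref{prop:exc-equations}. The one genuinely delicate point I anticipate is the mixed case \emph{(iii)}: one must recognize that such an equation cannot be reduced to a pure equation (a thrown exception is never a pure value in the model) and must instead be sent to the trivial/maximal pure theory, and one must not overlook the degenerate empty-source subcase, in which the equation is in fact provable. This is exactly the situation Assumption~\ref{ass:exc-equations} is tailored for, and the hypothesis that $P$ is non-empty is used precisely to keep that assumption sound.
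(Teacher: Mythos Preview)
Your proposal is correct and follows essentially the same approach as the paper: invoke Corollary~\ref{coro:hpc-equations}, obtain consistency from soundness via $\throw_P\not\eqs\id_P$, reduce via Proposition~\ref{prop:exc-canonical} to canonical forms, and dispatch the resulting cases with Proposition~\ref{prop:exc-equations} and Assumption~\ref{ass:exc-equations}. The only organizational difference is that the paper splits first on whether the domain $X$ is empty (handling all equations with empty domain uniformly via (initial$_1$)), whereas you split first on the three canonical-form combinations and invoke the empty-domain argument only in the mixed case; both orderings are sound and neither is materially simpler.
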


\begin{proof} 
Using Corollary~\ref{coro:hpc-equations},
the proof relies upon Propositions~\ref{prop:exc-canonical} and  
\ref{prop:exc-equations}.
The theory $T_\exc$ is consistent, because (by soundness)  
it cannot be proved that $\throw_P^\ppg \eqs \id_P^\pure$. 
Now, let us consider an equation between terms with domain $X$
and let us prove that it is $T_\exc$-equivalent to a set of pure equations.  
When $X$ is non-empty, 
Propositions~\ref{prop:exc-canonical} and~\ref{prop:exc-equations}, 
together with Assumption~\ref{ass:exc-equations}, 
prove that the given equation is $T_\exc$-equivalent to a 
set of pure equations. 
When $X$ is empty, then all terms from $X$ to $Y$ are equivalent to $\copa_Y$
so that the given equation 
is $T_\exc$-equivalent to the empty set of pure equations. 
\end{proof}

\section{Completeness of the core language for exceptions}
\label{sec:excore}

In this section, following~\cite{DDER14-exc},  
we describe a translation of the language for exceptions 
from Section~\ref{sec:exc} in a \emph{core} language with
\emph{catchers}.
Thereafter, in Theorem~\ref{theo:excore-complete}, we state the
relative Hilbert-Post completeness of this core language.
Let us call the usual language for exceptions with $\throw$ and $\trycatch$, 
as described in Section~\ref{sec:exc},
the \emph{programmers' language} for exceptions. 
The documentation on the behaviour of exceptions in many languages 
(for instance in Java \cite{java}) 
makes use of a \emph{core language} for exceptions 
which is studied in \cite{DDER14-exc}.
In this language, the empty type plays an important role and 
the fundamental operations for dealing with exceptions 
are $\tagg^\ppg:P\to \empt$ for encapsulating a parameter inside an exception  
and $\untag^\ctc:\empt\to P$ for recovering its parameter 
from any given exception. The new decoration $\ctc$ corresponds to 
\emph{catchers}: a catcher may recover from an exception,
it does not have to propagate it.  
Moreover, the equations also are decorated: 
in addition to the equations '$\eqs$' as in Section~\ref{sec:exc},  
now called \emph{strong equations}, there are 
\emph{weak equations} denoted '$\eqw$'. 

As in Section~\ref{sec:exc}, a set $E$ of exceptions is chosen;  
the interpretation is extended as follows: 
each catcher $f^\ctc:X\to Y$ is interpreted as a function $f:X+E\to Y+E$, 
and there is an obvious conversion from propagators to catchers; 
the interpretation of the composition of catchers is straightforward,
and it is compatible with the Kleisli composition for propagators.  
Weak and strong equations coincide on propagators,
where they are interpreted as equalities,  
but they differ on catchers: 
$f^\ctc \eqw g^\ctc:X\to Y$ means that the functions 
$f,g:X+E\to Y+E$ coincide on $X$, but maybe not on $E$. 
The interpretation of $\tagg^\ppg:P\to \empt$
is an injective function $\tagg:P\to E$ and the interpretation of 
$\untag^\ctc:\empt\to P$ is a function $\untag:E\to P+E$ such that 
$\untag(\tagg(p))=p$ for each parameter $p$.
Thus, the fundamental axiom relating $\tagg^\ppg$ and $\untag^\ctc$ 
is the weak equation $ \untag \circ \tagg \eqw \id_P$.  

\begin{figure}[ht]
\vspace{-5pt}
\begin{framed}
\small
\renewcommand{\arraystretch}{1.5}
\begin{tabular}{l} 
Pure part: the logic $L_\eqn$ with a distinguished type $P$ \\ 
Decorated terms: $\tagg^\acc \colon P\to \empt$, 
$\untag^\ctc \colon \empt\to P$, and \\ 
\quad $(f_k\circ \dots \circ f_1)^{(\max(d_1,...,d_k))}:X_0\to X_k$ 
for each $(f_i^{(d_i)}:X_{i-1}\to X_i)_{1\leq i\leq k}$ \\  
\quad with conversions from $f^\pure$ to $f^\ppg$ 
and from $f^\ppg$ to $f^\ctc$ \\ 
Rules:  \\ 
\quad (equiv$_\eqs$), (subs$_\eqs$), (repl$_\eqs$) for all decorations  \\ 
\quad (equiv$_\eqw$), (repl$_\eqw$) for all decorations, 
  (subs$_\eqw$) only when $h$ is pure \\ 
\quad (empty$_\eqw$) \squad 
  $\dfrac{f\colon \empt\to Y}{f \eqw \copa_Y}$ 
\quad ($\eqs$-to-$\eqw$) \squad 
  $\dfrac{f\eqs g}{f\eqw g} $  
\quad (ax) \squad 
  $\dfrac{}{\untag \circ \tagg \eqw \id_P} $  \\  
\quad (eq$_1$) \squad 
  $\dfrac{f_1^{(d_1)}\eqw f_2^{(d_2)}}{f_1\eqs f_2}$ 
  only when $d_1\leq 1$ and $d_2\leq 1$ \\
\quad (eq$_2$) \squad 
  $\dfrac{f_1,f_2\colon X\to Y \;\;
  f_1\eqw f_2 \;\; f_1 \circ \copa_X \eqs f_2 \circ \copa_X}{f_1\eqs f_2}$ \\ 
\quad (eq$_3$) \squad
  $\dfrac{f_1,f_2\colon \empt\to X \quad f_1 \circ \tagg \eqw f_2 \circ \tagg}
    {f_1\eqs f_2}$ \\ 
\end{tabular}
\renewcommand{\arraystretch}{1}
\normalsize
\vspace{-2mm}
\end{framed}
\vspace{-5pt}
\caption{Decorated logic for the core language for exceptions $L_\excore$} 
\label{fig:excore} 
\end{figure}

More precisely, the 
\emph{decorated logic for the core language for exceptions}
$L_\excore$ 
is defined in Fig.~\ref{fig:excore} as an extension of the 
monadic equational logic $L_\eqn$. 
There is an obvious conversion from strong to weak equations ($\eqs$-to-$\eqw$),
and in addition strong and weak equations coincide on propagators 
by rule (eq$_1$). 
Two catchers $f_1^\ctc,f_2^\ctc:X\to Y$ behave in the same way on exceptions 
if and only if $f_1\circ\copa_X \eqs f_2\circ\copa_X :\empt\to Y$, where
$\copa_X:\empt\to X$ builds a term of type $X$ from any exception. 
Then rule (eq$_2$) expresses the fact that weak and strong equations 
are related by the property that $f_1\eqs f_2$ if and only if 
$f_1\eqw f_2$ and $f_1\circ\copa_X \eqs f_2\circ\copa_X$.
This can also be expressed as a pair of weak equations: 
$f_1\eqs f_2$ if and only if 
$f_1\eqw f_2$ and $f_1\circ\copa_X\circ\tagg \eqw f_2\circ\copa_X\circ\tagg$
by rule (eq$_3$).
The \emph{core theory of exceptions} $T_\excore$ is the theory of $L_\excore$ 
generated from the theory $T_\eqn$ of $L_\eqn$. 
Some easily derived properties are stated in Lemma~\ref{lemm:excore-ul};
which will be used repeatedly.
\begin{lemma} 
\label{lemm:excore-ul}
\begin{enumerate}
\item \label{pt:excore-ul-lulu}
For all pure terms $u_1^\pure,u_2^\pure:X\to P$, the equation 
  $ u_1 \eqs u_2 $ is $T_\excore$-equivalent to 
  $\tagg \circ u_1 \eqs \tagg \circ u_2 $ and also to 
  $\untag \circ \tagg \circ u_1 \eqs \untag \circ \tagg \circ u_2 $.
\item \label{pt:excore-ul-l}
For all pure terms $u^\pure:X\to P$, $v^\pure:X\to\empt$, the equation 
  $ u \eqs \copa_P \circ v$ is $T_\excore$-equivalent to 
  $\tagg \circ u \eqs v $.
\end{enumerate}
\end{lemma}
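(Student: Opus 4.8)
The plan is to prove both points by chains of mutual derivations between equations, exploiting the interplay between the congruence rules (subs$_\eqs$), (repl$_\eqs$) and the rules governing the passage between weak and strong equations — namely ($\eqs$-to-$\eqw$), (eq$_1$), (subs$_\eqw$), (empty$_\eqw$) — together with the axiom (ax). Recall that, by Definition~\ref{defi:hpc-equations}, proving that $\{e\}$ and $\{e'\}$ are $T_\excore$-equivalent amounts to deriving $e$ from $T_\excore\cup\{e'\}$ and $e'$ from $T_\excore\cup\{e\}$ within $L_\excore$.

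For point~\ref{pt:excore-ul-lulu}, I would organize the three equations into a cycle of implications. Two of them are immediate applications of (repl$_\eqs$), which is valid for every decoration: $u_1\eqs u_2$ entails $\tagg\circ u_1\eqs\tagg\circ u_2$ by postcomposition with $\tagg$, and the latter entails $\untag\circ\tagg\circ u_1\eqs\untag\circ\tagg\circ u_2$ by postcomposition with $\untag$. The implication that closes the cycle is the only real content: starting from $\untag\circ\tagg\circ u_1\eqs\untag\circ\tagg\circ u_2$, I would weaken it via ($\eqs$-to-$\eqw$), then use the axiom $\untag\circ\tagg\eqw\id_P$ and rule (subs$_\eqw$) — legitimate here precisely because $u_1$ and $u_2$ are \emph{pure} — to obtain $\untag\circ\tagg\circ u_i\eqw u_i$ for $i=1,2$; transitivity and symmetry of $\eqw$ (via (equiv$_\eqw$)) then give $u_1\eqw u_2$, and since both sides have decoration at most~$1$, rule (eq$_1$) upgrades this to the strong equation $u_1\eqs u_2$.

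For point~\ref{pt:excore-ul-l}, I would reduce to point~\ref{pt:excore-ul-lulu}. Since $\copa_P\circ v$ is a pure term $X\to P$, point~\ref{pt:excore-ul-lulu} applied with $u_1=u$ and $u_2=\copa_P\circ v$ shows that $u\eqs\copa_P\circ v$ is $T_\excore$-equivalent to $\tagg\circ u\eqs\tagg\circ\copa_P\circ v$. It then suffices to establish that $\tagg\circ\copa_P\circ v\eqs v$ is a theorem of $T_\excore$, for then transitivity lets us replace $\tagg\circ\copa_P\circ v$ by $v$ on the right-hand side. To get $\tagg\circ\copa_P\eqs\id_\empt$, I note both are maps $\empt\to\empt$, so (empty$_\eqw$) gives $\tagg\circ\copa_P\eqw\copa_\empt$ and $\id_\empt\eqw\copa_\empt$, hence $\tagg\circ\copa_P\eqw\id_\empt$; both sides have decoration at most~$1$, so (eq$_1$) yields the strong equation, and (subs$_\eqs$) with $v$ then gives $\tagg\circ\copa_P\circ v\eqs\id_\empt\circ v\eqs v$.

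I do not expect any serious obstacle: the argument is essentially a routine equational computation. The one thing requiring care is the bookkeeping of decoration side-conditions — (subs$_\eqw$) may be used only when the substituted term is pure, and (eq$_1$) only when both members are pure or propagators — but both restrictions are respected throughout, since all the terms that get substituted or strengthened ($u$, $u_1$, $u_2$, $v$) are pure and $\tagg\circ\copa_P$ is a propagator; in particular, no catcher ever sits on the left of a weak equation that we wish to turn into a strong one.
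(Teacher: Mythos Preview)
Your proposal is correct and follows essentially the same approach as the paper: for point~\ref{pt:excore-ul-lulu} the paper likewise uses (ax) with (subs$_\eqw$) on the pure $u_i$ and then (eq$_1$), and for point~\ref{pt:excore-ul-l} it likewise establishes $\tagg\circ\copa_P\eqs\id_\empt$ (from the fact that it is a propagator $\empt\to\empt$) and then appeals to point~\ref{pt:excore-ul-lulu}. Your write-up is simply more explicit about the intermediate rule invocations than the paper's terse version.
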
 

\begin{proof}
\begin{enumerate}
\item %
Implications from left to right are clear. Conversely, if 
$\untag \circ \tagg \circ u_1 \eqs \untag \circ \tagg \circ u_2 $,  
then using the axiom (ax) and the rule (subs$_\eqw$) we get $u_1 \eqw u_2$. 
Since $u_1$ and $u_2$ are pure this means that $u_1 \eqs u_2$. 
\item %
First, since $\tagg \circ \copa_P :\empt\to\empt $ is a propagator  
we have $\tagg \circ \copa_P \eqs \id_\empt$. 
Now, if $ u \eqs \copa_P \circ v$ then 
$ \tagg \circ u \eqs \tagg \circ \copa_P \circ v \eqs v$.
Conversely, if $\tagg \circ u \eqs v$ then 
$\tagg \circ u \eqs \tagg \circ \copa_P \circ v$, 
and by Point~\ref{pt:excore-ul-lulu} this means that $u \eqs \copa_P \circ v$.
\end{enumerate}
\end{proof}

The operation $\untag$ in the core language can be used for decomposing 
the $\trycatch$ construction in the programmer's language 
in two steps: a step for catching the exception, 
which is nested into a second step inside the $\trycatch$ block: 
this corresponds to a translation 
of the programmer's language in the core language, 
as in \cite{DDER14-exc}, which is reminded below;
then Proposition~\ref{prop:excore-impl} proves the correctness  
of this translation. 
In view of this translation we extend the core language with:
\begin{itemize}
\item for each $b^\ppg:P\to Y$, a catcher $(\catchcore(b))^\ctc:Y\to Y$ 
such that $\catchcore(b) \eqw \id_Y $ and 
$\catchcore(b)\circ\copa_Y \eqs b\circ\untag$:  
if the argument of $\catchcore(b)$ is non-exceptional then nothing is done, 
otherwise the parameter $p$ of the exception is recovered and $b(p)$ is ran.
\item for each $a^\ppg\!:\!X\to Y$ and $k^\ctc\!:\!Y\to Y$, 
a propagator $ (\trycore(a,k))^\ppg :X\to Y$ 
such that $ \trycore(a,k) \eqw k\circ a$: 
thus $\trycore(a,k)$ behaves as $k\circ a$ on non-exceptional arguments,
but it does always propagate exceptions.
\end{itemize}
Then, a translation of the programmer's language of exceptions 
in the core language is easily obtained: 
for each type $Y$, $\throw_Y^\ppg \!=\! \copa_Y \circ \tagg :P\to Y$. 
and for each $a^\ppg\!:\!X\!\to\! Y$, $b^\ppg\!:\!P\!\to\! Y$, 
$(\try (a) \catch (b))^\ppg \!=\! \trycore(a,\catchcore(b))\!:\!X\!\to\! Y$.
This translation is correct: see Proposition~\ref{prop:excore-impl}.

\begin{proposition}
\label{prop:excore-impl}
If the pure term $\copa_Y:\empt\to Y$ is a monomorphism with respect 
to propagators for each type $Y$, 
the above translation of the programmers' language for exceptions 
in the core language is correct. 
\end{proposition}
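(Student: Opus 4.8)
The goal is to show that under the monomorphism hypothesis on each $\copa_Y$, the translation sends every axiom and term constructor of $L_\exc$ to a provable statement in $L_\excore$; equivalently, that the translated versions of the defining equations of $\throw$ and $\trycatch$ (namely the rules (initial$_1$), (propagate), (recover), (try), (try$_0$), (try$_1$) of Fig.~\ref{fig:exc}) all hold in $T_\excore$. I would organize the proof as one verification per rule, using the abbreviations $\throw_Y = \copa_Y\circ\tagg$ and $\try(a)\catch(b)=\trycore(a,\catchcore(b))$ together with the defining weak equations $\catchcore(b)\eqw\id_Y$, $\catchcore(b)\circ\copa_Y\eqs b\circ\untag$, and $\trycore(a,k)\eqw k\circ a$, plus the core axiom (ax) $\untag\circ\tagg\eqw\id_P$ and Lemma~\ref{lemm:excore-ul}.

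First I would dispatch the easy rules. For (propagate), $a\circ\throw_X = a\circ\copa_X\circ\tagg$, and since $a\circ\copa_X:\empt\to Y$ is a propagator it must be $\copa_Y$ by (initial$_1$) in the core logic (the initiality of $\empt$ survives), giving $a\circ\throw_X\eqs\copa_Y\circ\tagg=\throw_Y$. For (initial$_1$) itself: a core propagator $\empt\to Y$ equals $\copa_Y$ by the same initiality argument. For (recover): from $\throw_Y\circ u_1\eqs\throw_Y\circ u_2$, i.e. $\copa_Y\circ\tagg\circ u_1\eqs\copa_Y\circ\tagg\circ u_2$, the monomorphism hypothesis on $\copa_Y$ (with respect to propagators — and $\tagg\circ u_i$ are propagators) yields $\tagg\circ u_1\eqs\tagg\circ u_2$, whence $u_1\eqs u_2$ by Lemma~\ref{lemm:excore-ul}(\ref{pt:excore-ul-lulu}). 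Rule (try) is immediate from congruence, since the translation is built from the term constructors.

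The substantive cases are (try$_0$) and (try$_1$), which require passing from weak to strong equations via rules (eq$_2$)/(eq$_3$), and this is where I expect the main work. For (try$_0$), with $u^\pure:X\to Y$ and $b^\ppg:P\to Y$: we want $\trycore(u,\catchcore(b))\eqs u$. On non-exceptional arguments, $\trycore(u,\catchcore(b))\eqw\catchcore(b)\circ u\eqw\id_Y\circ u=u$ using $\catchcore(b)\eqw\id_Y$ and (subs$_\eqw$) — here one must check the decoration side-condition, but $u$ is pure so (subs$_\eqw$) applies. To upgrade $\eqw$ to $\eqs$ I would use (eq$_2$): it remains to show $\trycore(u,\catchcore(b))\circ\copa_X\eqs u\circ\copa_X$, and since both sides are propagators $\empt\to Y$ this follows again from initiality of $\empt$. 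For (try$_1$), with $u^\pure:X\to P$ and $b^\ppg:P\to Y$: we want $\trycore(\throw_Y\circ u,\catchcore(b))\eqs b\circ u$. Weakly, $\trycore(\throw_Y\circ u,\catchcore(b))\eqw\catchcore(b)\circ\copa_Y\circ\tagg\circ u\eqs b\circ\untag\circ\tagg\circ u\eqw b\circ u$, using the defining equation of $\catchcore(b)$ and then (ax) with (subs$_\eqw$) ($u$ pure). Again upgrade to a strong equation via (eq$_2$), checking equality after precomposition with $\copa_X$, where both sides are propagators and initiality closes it.

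The main obstacle throughout is bookkeeping of decorations: every application of (subs$_\eqw$) is legal only when the substituted term is pure, and every use of (eq$_1$)/(eq$_2$)/(eq$_3$) carries a decoration constraint; so the real content is verifying that in each reduction the relevant terms ($u$, $\copa_X$, composites of propagators) have low enough decoration. A secondary point is that I am tacitly using that $L_\eqn$'s initial-object structure and the resulting uniqueness of propagators $\empt\to Y$ persist in $L_\excore$ — this is exactly (initial$_1$)'s core analogue, obtained from (initial) and ($\eqs$-to-$\eqw$)/(eq$_1$) — and that the monomorphism hypothesis is invoked only in the (recover) case, which is why it appears as a hypothesis of the proposition rather than a theorem of $L_\excore$.
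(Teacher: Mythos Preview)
Your proposal is correct and follows the same rule-by-rule verification as the paper. The one notable difference is how you pass from weak to strong equations in (try$_0$) and (try$_1$): you invoke (eq$_2$) and check equality after precomposing with $\copa_X$, whereas the paper simply remarks at the outset that strong and weak equations coincide on the image of the translation (every translated $L_\exc$-term is a propagator, since $\trycore(a,k)^\ppg$ is a propagator by construction), so rule (eq$_1$) upgrades each weak equation to a strong one immediately and the paper's proofs of (try), (try$_0$), (try$_1$) stop at $\eqw$. Your route through (eq$_2$) is valid but does more work than needed; on the other hand it makes each case self-contained without appeal to a global decoration check. One small point: for (try) you say ``immediate from congruence'', but $\trycore(-,k)$ is specified only by the weak equation $\trycore(a,k)\eqw k\circ a$, not by a congruence rule; the paper's argument goes through that weak equation and (subs$_\eqs$) on $\catchcore(b)\circ a_i$, which is what you would unfold to anyway.
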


\begin{proof}
We have to prove that the image of each rule of $L_\exc$ is satisfied. 
It should be reminded that strong and weak equations coincide on $L_\exc$. 
\begin{itemize}
\item (propagate) For each $a^\ppg:X\to Y$,  
the rules of $L_\excore$ imply that $a\circ \copa_X \eqs \copa_Y$, so that 
$a\circ \copa_X \circ \tagg \eqs \copa_Y \circ \tagg $.
\item (recover) For each $u_1^\pure,u_2^\pure:X\to P$, 
if $\copa_Y \circ \tagg \circ u_1 \eqs \copa_Y \circ \tagg \circ u_2 $ 
since $\copa_Y$ is a monomorphism with respect to propagators 
we have $\tagg \circ u_1 \eqs \tagg \circ u_2 $, 
so that, 
by Point~\ref{pt:excore-ul-lulu} in Lemma~\ref{lemm:excore-ul}, 
we get $u_1 \eqs u_2 $. 
\item (try) 
Since $\try(a_i)\catch(b) \eqw \catch(b) \circ a_i$ 
for $i\in\{1,2\}$, we get $\try(a_1)\catch(b) \eqw \try(a_2)\catch(b)$ 
as soon as $a_1\eqs a_2$. 
\item (try$_0$) For each $u^\pure:X\to Y$ and $b^\ppg:P\to Y$, 
we have $\trycore(u,\catchcore(b)) \eqw \catchcore(b) \circ u$ 
and $\catchcore(b) \circ u \eqw u $ 
(because $\catchcore(b)\eqw\id $ and $u$ is pure), 
so that $\trycore(u,\catchcore(b)) \eqw u $. 
\item (try$_1$) For each $u^\pure:X\to P$ and $b^\ppg:P\to Y$, 
we have $\trycore(\copa_Y \circ \tagg \circ u,\catchcore(b)) \eqw 
\catchcore(b) \circ \copa_Y \circ \tagg \circ u$  
and $ \catchcore(b) \circ \copa_Y \eqs b \circ \untag$
so that $\trycore(\copa_Y \circ \tagg \circ u,\catchcore(b)) \eqw 
b \circ \untag \circ \tagg \circ u$. 
We have also $ \untag \circ \tagg \circ u \eqw u$ 
(because $\untag \circ \tagg \eqw\id $ and $u$ is pure), 
so that $\trycore(\copa_Y \circ \tagg \circ u,\catchcore(b)) \eqw 
b \circ u$. 
\end{itemize}
\end{proof}

\begin{example}[Continuation of
  Example~\ref{ex:trycatch}]\label{ex:taguntag} 
We here show that it is possible to separate the matching between
normal or exceptional 
behavior from the recovery after an
exceptional behavior: to prove that
$\try(\successor(\throw~3))\catch(\predecessor)$ is equivalent to $2$
in the core language, we first use the translation to get:
$\trycore(\successor\circ\copa\circ\tagg\circ{}3,\catchcore(\predecessor))$. 
Then
(empty$_\eqw$) shows that
$\successor\circ\copa\tagg\circ{}3\eqw\copa\circ\tagg\circ{}3$.
Now,
the $\trycore$ and $\catchcore$ translations show that 
$\trycore(\copa\circ\tagg\circ{}3,\catchcore(\predecessor)) \eqw 
\catchcore(\predecessor)\circ\copa\circ\tagg\circ{}3 \eqw 
\predecessor\circ\untag\circ\tagg\circ{}3$.
Finally the axiom (ax) and (eq$_1$) give 
$\predecessor\circ{}3\eqs{}2$.
\end{example}

In order to prove the completeness of  
the core decorated theory for exceptions, 
as for the proof of Theorem~\ref{theo:exc-complete}, 
we first determine canonical forms 
in Proposition~\ref{prop:excore-canonical},
then we study the equations between terms in canonical form  
in Proposition~\ref{prop:excore-equations}. 
Let us begin by proving the 
\emph{fundamental strong equation for exceptions}~(\ref{eq:excore-fundamental}):
by replacement in the axiom (ax) 
we get $\tagg \circ \untag \circ \tagg \eqw \tagg $,  
then by rule (eq$_3$):
\begin{equation}
\label{eq:excore-fundamental}
 \tagg \circ \untag \eqs \id_\empt 
\end{equation}

\begin{proposition}
\label{prop:excore-canonical} 
\begin{enumerate}
\item \label{pt:excore-canonical-acc} 
For each propagator $a^\ppg:X\to Y$, either $a$ is pure or 
there is a pure term $v^\pure:X\to P$ such that 
$ a^\ppg \eqs \copa_Y^\pure \circ \tagg^\ppg \circ v^\pure $.
And for each propagator $a^\ppg:X\to \empt$ (either pure or not), 
there is a pure term $v^\pure:X\to P$ such that 
$ a^\ppg \eqs \tagg^\ppg \circ v^\pure $. 
\item \label{pt:excore-canonical-ctc} 
For each catcher $f^\ctc:X\to Y$, either $f$ is a propagator or 
there is an propagator $a^\acc:P\to Y$ and a pure term $u^\pure:X\to P$ 
such that 
$ f^\ctc \eqs a^\ppg \circ \untag^\ctc \circ \tagg^\ppg \circ u^\pure $.
\end{enumerate}
\end{proposition}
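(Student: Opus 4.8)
The plan is to prove both points together by structural induction on the term, following the pattern of the proof of Proposition~\ref{prop:exc-canonical}. Throughout I will use the following elementary facts: any propagator $a^\ppg:\empt\to Y$ satisfies $a\eqs\copa_Y$ (apply (empty$_\eqw$), then (eq$_1$) since both decorations are $\leq 1$); any pure term $u^\pure:\empt\to Y$ satisfies $u\eqs\copa_Y$ (rule (initial)), so in particular $\copa_\empt\eqs\id_\empt$; the composite $\tagg\circ\copa_P:\empt\to\empt$ is a propagator, hence equals $\id_\empt$; and the fundamental strong equation~\eqref{eq:excore-fundamental}, namely $\tagg\circ\untag\eqs\id_\empt$. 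A non-pure term is decomposed, in the usual way for a monadic equational logic, as $b\circ\mathtt{op}\circ v$ with $v$ the longest pure prefix and $\mathtt{op}$ the first non-pure operation, which here is either $\tagg$ or $\untag$.

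For Point~\ref{pt:excore-canonical-acc}, let $a^\ppg:X\to Y$. If $a$ is pure we are done. Otherwise, since $\untag^\ctc$ is a catcher and $a$ is a propagator, the first non-pure operation must be $\tagg$, so $a=b^\ppg\circ\tagg^\ppg\circ v^\pure$ with $v^\pure:X\to P$ and $b^\ppg:\empt\to Y$; then $b\eqs\copa_Y$ gives $a\eqs\copa_Y\circ\tagg\circ v$, the second alternative (no induction is actually needed here). For the case $Y=\empt$: by what precedes, either $a\eqs w^\pure$ with $w^\pure:X\to\empt$ — then $\copa_P\circ w:X\to P$ is pure and $\tagg\circ(\copa_P\circ w)\eqs w\eqs a$ — or $a\eqs\copa_\empt\circ\tagg\circ v\eqs\tagg\circ v$ since $\copa_\empt\eqs\id_\empt$.

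For Point~\ref{pt:excore-canonical-ctc}, let $f^\ctc:X\to Y$; if $f$ is a propagator (or equivalent to one), we are done, so assume not and decompose $f=b\circ\mathtt{op}\circ v$. If $\mathtt{op}=\tagg$, then $f=b^\ctc\circ\tagg^\ppg\circ v^\pure$ with $v^\pure:X\to P$ and $b^\ctc:\empt\to Y$; the induction hypothesis for $b$ gives either $b\eqs\copa_Y$, so that $f\eqs\copa_Y\circ\tagg\circ v$ is equivalent to a propagator, or $b\eqs c^\ppg\circ\untag\circ\tagg\circ w^\pure$ with $c^\ppg:P\to Y$ and $w^\pure:\empt\to P$; in the latter case $w\eqs\copa_P$, hence $b\eqs c\circ\untag\circ\tagg\circ\copa_P\eqs c\circ\untag$, hence $f\eqs c\circ\untag\circ\tagg\circ v$, the required form. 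If $\mathtt{op}=\untag$, then $f=b^\ctc\circ\untag\circ v^\pure$ with $v^\pure:X\to\empt$ and $b^\ctc:P\to Y$; put $u:=\copa_P\circ v:X\to P$, a pure term with $\tagg\circ u\eqs v$ (second point of Lemma~\ref{lemm:excore-ul}). The induction hypothesis for $b$ gives either that $b$ is a propagator, in which case $\untag\circ\tagg\circ u\eqs\untag\circ v$ yields $f\eqs b\circ\untag\circ\tagg\circ u$, the required form with $a:=b$ (this subsumes the base case $f=\untag$), or $b\eqs c^\ppg\circ\untag\circ\tagg\circ w^\pure$ with $c^\ppg:P\to Y$ and $w^\pure:P\to P$, in which case $f\eqs c\circ\untag\circ\tagg\circ w\circ\untag\circ v$.

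The main obstacle is this last subcase, where two occurrences of $\untag$ must be merged into one. I plan to prove $c\circ\untag\circ\tagg\circ w\circ\untag\eqs c\circ w\circ\untag$ as terms $\empt\to Y$, which then yields $f\eqs(c\circ w)\circ\untag\circ\tagg\circ u$ with $c\circ w:P\to Y$ a propagator and $u:X\to P$ pure, as wanted. For this strong equation I would use rule (eq$_3$): it suffices to prove the corresponding weak equation after post-composition with $\tagg$. On the left-hand side the innermost sub-term $\untag\circ\tagg$ is rewritten to $\id_P$ by axiom (ax) and repeated use of (repl$_\eqw$); on the right-hand side (subs$_\eqw$) is applicable because $w$ is pure, again reducing $\untag\circ\tagg$ to $\id_P$; both sides thus become $c\circ w$ and (equiv$_\eqw$) concludes. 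The recurring delicate point, here and throughout the induction, is to keep track of decorations so that every rule is applied within its side conditions — in particular (eq$_1$), (eq$_3$), and the ``only when $h$ is pure'' restriction on (subs$_\eqw$) — which is exactly the bookkeeping that the Coq formalization discharges.
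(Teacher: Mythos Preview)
Your proof is correct and follows essentially the same structural-induction approach as the paper, with the same decomposition into four subcases in Point~\ref{pt:excore-canonical-ctc} and the same use of (eq$_3$) to collapse the two $\untag$'s in the last subcase. One small slip there: you have swapped the roles of (repl$_\eqw$) and (subs$_\eqw$). On the left-hand side $c\circ\untag\circ\tagg\circ w\circ\untag\circ\tagg$, reducing the \emph{left} occurrence of $\untag\circ\tagg$ (the one sitting above the pure $w$) is where (subs$_\eqw$) with $w$ pure is needed, followed by (repl$_\eqw$) to attach $c$; on the right-hand side $c\circ w\circ\untag\circ\tagg$, the reduction is plain (repl$_\eqw$) and purity of $w$ plays no role. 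Since (repl$_\eqw$) has no side condition in $L_\excore$, nothing actually breaks, but your justification ``because $w$ is pure'' belongs to the other reduction.
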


\begin{proof}
\begin{enumerate}
\item %
If the propagator $a^\ppg:X\to Y$ is not pure then it contains 
at least one occurrence of $\tagg^\ppg$. 
Thus, it can be written in a unique way as 
$ a = b \circ \tagg \circ v$ for some propagator $b^\ppg:\empt\to Y$ 
and some pure term $v^\pure:X\to P$. 
Since $b^\ppg:\empt\to Y$ we have $b^\ppg\eqs\copa_Y^\pure$, 
and the first result follows.
When $X=\empt$, it follows that $a^\ppg \eqs \tagg^\ppg \circ v^\pure$.
When $a:X\to \empt$ is pure, one has 
$a \eqs \tagg^\ppg \circ (\copa_P \circ a)^\pure$.
\item %
The proof proceeds by structural induction. 
If $f$ is pure the result is obvious, otherwise 
$f$ can be written in a unique way as 
$f = g \circ \mathtt{op} \circ u$ where $u$ is pure, $\mathtt{op}$ 
is either $\tagg$ or $\untag$ and $g$ is the remaining part of $f$.
By induction, either $g$ is a propagator or 
$g \eqs b\circ \untag \circ \tagg \circ v$ 
for some pure term $v$ and some propagator $b$.
So, there are four cases to consider.
(1) If $\mathtt{op}=\tagg$ and $g$ is a propagator then 
$f$ is a propagator. 
(2) If $\mathtt{op}=\untag$ and $g$ is a propagator then 
by Point~\ref{pt:excore-canonical-acc} 
there is a pure term $w$ such that $u \eqs \tagg \circ w$,
so that $f \eqs g^\ppg \circ \untag \circ \tagg \circ w^\pure$. 
(3) If $\mathtt{op}=\tagg$ and 
$g \eqs b^\ppg\circ \untag \circ \tagg \circ v^\pure$ then 
$f \eqs b\circ \untag \circ \tagg \circ v \circ \tagg \circ u$. 
Since $v: \empt\to P$ is pure we have $\tagg \circ v \eqs \id_\empt$,
so that $f \eqs b^\ppg\circ \untag \circ \tagg \circ u^\pure$. 
(4) If $\mathtt{op}=\untag$ and 
$g \eqs b^\ppg \circ \untag \circ \tagg \circ v^\pure$ then 
$f \eqs b \circ \untag \circ \tagg \circ v \circ \untag \circ u$. 
Since $v$ is pure, by (ax) and (subs$_\eqw$) we have 
$ \untag \circ \tagg \circ v \eqw v $. 
Besides, by (ax) and (repl$_\eqw$) we have 
$ v \circ \untag \circ \tagg \eqw v $ and 
$ \untag \circ \tagg \circ v \circ \untag \circ \tagg \eqw 
\untag \circ \tagg \circ v $. %
Since $\eqw$ is an equivalence relation these three weak equations imply 
$ \untag \circ \tagg \circ v \circ \untag \circ \tagg \eqw 
v \circ \untag \circ \tagg $. 
By rule (eq$_3$) we get 
$ \untag \circ \tagg \circ v \circ \untag \eqs v \circ \untag$, 
and by Point~\ref{pt:excore-canonical-acc} there is 
a pure term $w$ such that $u \eqs \tagg \circ w$, 
so that 
$f \eqs (b \circ v)^\ppg \circ \untag \circ \tagg \circ w^\pure$. 
\end{enumerate}
\end{proof}

Thanks to Proposition~\ref{prop:excore-canonical}, 
in order to study equations in the logic $L_\excore$
we may restrict our study to pure terms, 
propagators of the form $\copa_Y^\pure \circ \tagg^\ppg \circ v^\pure$ 
and catchers of the form 
$a^\ppg \circ \untag^\ctc \circ \tagg^\ppg \circ u^\pure$.

\begin{proposition}  
\label{prop:excore-equations} 
\begin{enumerate}
\item 
\label{prop:excore-equations-ctc-ctc} 
For all $a_1^\ppg,a_2^\ppg:P\to Y$ and $u_1^\pure,u_2^\pure:X\to P$, 
let $f_1^\ctc = a_1\circ \untag\circ \tagg\circ u_1:X\to Y$
and $f_2^\ctc = a_2\circ \untag\circ \tagg\circ u_2:X\to Y$,
then $f_1 \eqw f_2 $ is $T_\excore$-equivalent to 
$ a_1\circ u_1 \eqs a_2\circ u_2 $
and $ f_1\eqs f_2 $ is $T_\excore$-equivalent to 
$ \{ a_1\eqs a_2 \;,\; a_1\circ u_1 \eqs a_2\circ u_2 \}$.
\item 
\label{prop:excore-equations-ctc-ppg}
For all $a_1^\ppg:P\to Y$, $u_1^\pure:X\to P$ and $a_2^\ppg:X\to Y$, 
let $ f_1^\ctc = a_1\circ \untag\circ \tagg\circ u_1:X\to Y$,
then 
$f_1 \eqw a_2$ is $T_\excore$-equivalent to 
$ a_1\circ u_1 \eqs a_2 $ and 
$f_1 \eqs a_2$ is $T_\excore$-equivalent to 
$ \{ a_1\circ u_1 \eqs a_2 \;,\; a_1\eqs \copa_Y \circ \tagg \}$.
\item 
\label{pt:excorec-equations-ppg-ppg} 
Let us assume that $\copa_Y^\pure$ is a monomorphism with respect 
to propagators. 
For all $v_1^\pure,v_2^\pure:X\to P$,
let $a_1^\ppg = \copa_Y\circ \tagg \circ v_1:X\to Y$
and $a_2^\ppg = \copa_Y\circ \tagg \circ v_2:X\to Y$. 
Then 
$ a_1\eqs a_2 $ is $T_\excore$-equivalent to 
$ v_1\eqs v_2 $. 
\end{enumerate}
\end{proposition}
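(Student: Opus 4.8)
The plan is to pull every equation of $L_\excore$ down to the pure fragment, exploiting that a canonical catcher $a\circ\untag\circ\tagg\circ u$ collapses to the propagator $a\circ u$ as soon as a \emph{weak} equation is enough, and that rules (eq$_1$) and (eq$_2$) control precisely when a weak equation may be upgraded to a strong one. First I would record three elementary consequences of the rules, to be reused throughout. (i)~For every propagator $a$ and pure term $u\colon X\to P$, the axiom (ax) together with (subs$_\eqw$) (legal since $u$ is pure) and (repl$_\eqw$) give $a\circ\untag\circ\tagg\circ u\eqw a\circ u$; in particular every canonical catcher is weakly equal to a propagator. (ii)~For pure $u\colon X\to P$ the terms $\tagg\circ u\circ\copa_X$ and $\id_\empt$ are both propagators $\empt\to\empt$, hence equal by (initial$_1$), so $a\circ\untag\circ\tagg\circ u\circ\copa_X\eqs a\circ\untag$. (iii)~For propagators $a,a'\colon P\to Y$, the equation $a\circ\untag\eqs a'\circ\untag$ (between terms $\empt\to Y$) is $T_\excore$-equivalent to $a\eqs a'$: right-to-left is (subs$_\eqs$); conversely, precomposing by $\tagg$ and using (ax), (repl$_\eqw$) to replace $\untag\circ\tagg$ by $\id_P$ yields $a\eqw a'$, whence $a\eqs a'$ by (eq$_1$) since $a,a'$ are propagators.

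For Part~\ref{prop:excore-equations-ctc-ctc} I would handle the two statements in turn. For the weak one: by (i), $T_\excore\vdash f_i\eqw a_i\circ u_i$, so $f_1\eqw f_2$ and $a_1\circ u_1\eqw a_2\circ u_2$ are $T_\excore$-equivalent, and since both $a_i\circ u_i$ are propagators, ($\eqs$-to-$\eqw$) and (eq$_1$) make the latter $T_\excore$-equivalent to $a_1\circ u_1\eqs a_2\circ u_2$. For the strong one: rule (eq$_2$) makes $f_1\eqs f_2$ $T_\excore$-equivalent to the pair $\{\,f_1\eqw f_2,\ f_1\circ\copa_X\eqs f_2\circ\copa_X\,\}$; the first member was just dealt with, and by (ii) the second is $T_\excore$-equivalent to $a_1\circ\untag\eqs a_2\circ\untag$, which by (iii) is $T_\excore$-equivalent to $a_1\eqs a_2$. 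Collecting the two members yields the announced set.

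Part~\ref{prop:excore-equations-ctc-ppg} is identical with $a_2$ in place of $f_2$, using in addition $a_2\circ\copa_X\eqs\copa_Y$ from (initial$_1$): the weak statement reduces via (i) and (eq$_1$) to $a_1\circ u_1\eqs a_2$, and (eq$_2$) reduces $f_1\eqs a_2$ to $\{\,a_1\circ u_1\eqs a_2,\ a_1\circ\untag\eqs\copa_Y\,\}$. It then remains to see that $a_1\circ\untag\eqs\copa_Y$ is $T_\excore$-equivalent to $a_1\eqs\copa_Y\circ\tagg$: the fundamental strong equation $\tagg\circ\untag\eqs\id_\empt$ gives one direction, and for the other one precomposes by $\tagg$, replaces $\untag\circ\tagg$ by $\id_P$ as in (iii), and concludes with (eq$_1$) since both sides are propagators. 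Finally, for Part~\ref{pt:excorec-equations-ppg-ppg}, with $\copa_Y^\pure$ a monomorphism with respect to propagators: left-to-right is (repl$_\eqs$); conversely $\copa_Y\circ\tagg\circ v_1\eqs\copa_Y\circ\tagg\circ v_2$ together with the monomorphism hypothesis applied to the propagators $\tagg\circ v_1,\tagg\circ v_2$ gives $\tagg\circ v_1\eqs\tagg\circ v_2$, which is $T_\excore$-equivalent to $v_1\eqs v_2$ by Point~\ref{pt:excore-ul-lulu} in Lemma~\ref{lemm:excore-ul}.

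The bookkeeping of which composite is a propagator and which is a catcher is routine. The only genuinely delicate point is step (iii) (and its analogue in Part~\ref{prop:excore-equations-ctc-ppg}), where a strong equation between catchers of the shape $\cdot\circ\untag$ must be pulled back to a strong equation between the underlying propagators; this is where the interplay of (ax), (repl$_\eqw$) and (eq$_1$) is essential, and I expect it to be the main obstacle.
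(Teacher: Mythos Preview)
Your proof is correct and follows essentially the same approach as the paper, with the pleasant organizational choice of isolating the three key observations (i)--(iii) upfront rather than repeating them inline for each part. One cosmetic point: the rule you cite as (initial$_1$) belongs to $L_\exc$, not to $L_\excore$; in the core logic the fact that every propagator $\empt\to Y$ is strongly equal to $\copa_Y$ is obtained from (empty$_\eqw$) together with (eq$_1$).
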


\begin{proof} 
\begin{enumerate}
\item %
Rule (eq$_2$) implies that $f_1\eqs f_2$ if and only if 
$f_1\eqw f2$ and $f_1 \circ \copa_X \eqs f_2 \circ \copa_X$. 
On the one hand, $f_1\eqw f_2$ if and only if $a_1\circ u_1 \eqs a_2\circ u_2$:
indeed, for each $i\in\{1,2\}$, by (ax) and (subs$_\eqw$), since $u_i$ is pure 
we have $f_i \eqw a_i \circ u_i$.
On the other hand, let us prove that 
$f_1 \circ \copa_X \eqs f_2 \circ \copa_X$ if and only if $a_1 \eqs a_2$.
For each $i\in\{1,2\}$, 
the propagator $\tagg \circ u_i \circ \copa_X : \empt\to\empt$ satisfies 
$\tagg \circ u_i \circ \copa_X \eqs \id_\empt$,
so that $f_i \circ \copa_X \eqs a_i \circ \untag $.
Thus, $f_1 \circ \copa_X \eqs f_2 \circ \copa_X$ if and only if 
$a_1 \circ \untag \eqs a_2 \circ \untag$.
Clearly, if $a_1 \eqs a_2$ then $a_1 \circ \untag \eqs a_2 \circ \untag$.
Conversely, if $a_1 \circ \untag \eqs a_2 \circ \untag$
then $a_1 \circ \untag \circ \tagg \eqs a_2 \circ \untag \circ \tagg $, 
so that by (ax) and (repl$_\eqw$) we get $a_1 \eqw a_2$,
which means that $a_1 \eqs a_2$ because $a_1$ and $a_2$ are propagators. 
\item %
Rule (eq$_2$) implies that $f_1\eqs a_2$ if and only if 
$f_1\eqw a_2$ and $f_1 \circ \copa_X \eqs a_2 \circ \copa_X$. 
On the one hand, $f_1\eqw a_2$ if and only if $a_1\circ u_1 \eqs a_2$:
indeed, by (ax) and (subs$_\eqw$), since $u_1$ is pure 
we have $f_1 \eqw a_1 \circ u_1$.
On the other hand, let us prove that 
$f_1 \circ \copa_X \eqs a_2 \circ \copa_X$ if and only if 
$a_1\eqs \copa_Y \circ \tagg$, in two steps.
Since $ a_2 \circ \copa_X : \empt\to Y$ is a propagator, 
we have $ a_2 \circ \copa_X \eqs \copa_Y$. 
Since $f_1 \circ \copa_X = 
a_1\circ \untag\circ \tagg\circ u_1 \circ \copa_X$ 
with $\tagg\circ u_1 \circ \copa_X:\empt\to\empt $ a propagator, 
we have $\tagg\circ u_1 \circ \copa_X \eqs \id_\empt$ 
and thus we get $ f_1 \circ \copa_X \eqs a_1\circ \untag$. 
Thus, $f_1 \circ \copa_X \eqs a_2 \circ \copa_X$ if and only if  
$a_1\circ \untag \eqs \copa_Y$. 
If $a_1\circ \untag \eqs \copa_Y$ then 
$a_1\circ \untag\circ \tagg \eqs \copa_Y\circ \tagg$, 
by (ax) and (repl$_\eqw$) this implies 
$a_1 \eqw \copa_Y\circ \tagg$, which is a strong equality because 
both members are propagators. 
Conversely, if $a_1 \eqs \copa_Y\circ \tagg $ 
then $a_1 \circ \untag \eqs \copa_Y\circ \tagg \circ \untag $, 
by the fundamental equation~(\ref{eq:excore-fundamental})
this implies $a_1 \circ \untag \eqs \copa_Y$. 
Thus, $a_1\circ \untag \eqs \copa_Y$ if and only if 
$a_1 \eqs \copa_Y\circ \tagg $. 
\item % 
Clearly, if $ v_1\eqs v_2 $ then 
$ \copa_Y\circ \tagg \circ v_1\eqs \copa_Y\circ \tagg \circ v_2$.
Conversely, if $ \copa_Y\circ \tagg \circ v_1\eqs \copa_Y\circ \tagg \circ v_2$
then since $\copa_Y$ is a monomorphism with respect to propagators 
we get $ \tagg \circ v_1\eqs \tagg \circ v_2$, 
so that $ \untag \circ \tagg \circ v_1\eqs \untag \circ \tagg \circ v_2$.
Now, from (ax) we get $ v_1\eqw v_2 $, which means that 
$ v_1\eqs v_2 $ because $v_1$ and $v_2$ are pure. 
\end{enumerate}
\end{proof}

Assumption~\ref{ass:excore-equations} 
is the image of Assumption~\ref{ass:exc-equations}
by the above translation. 

\begin{assumption}
\label{ass:excore-equations} 
In the logic $L_\excore$, 
the type of parameters $P$ is non-empty, and  
% \label{pt:excore-equations-ppg-pure}
for all $v_1^\pure:X\to P$ and $v_2^\pure: X\to Y$ with $X$ non-empty, 
let $a_1^\ppg = \copa_Y\circ \tagg \circ v_1 :X\to Y$. 
Then 
$ a_1^\ppg \eqs v_2^\pure$ is $T_\exc$-equivalent to $T_\mymaxz$. 
\end{assumption}

\begin{theorem} 
\label{theo:excore-complete}
Under Assumption~\ref{ass:excore-equations},
the theory of exceptions $T_\excore$ is Hilbert-Post complete
with respect to the pure sublogic $L_\eqn$ of $L_\excore$. 
\end{theorem}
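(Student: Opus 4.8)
The plan is to reproduce, \emph{mutatis mutandis}, the proof of Theorem~\ref{theo:exc-complete}. By Corollary~\ref{coro:hpc-equations} it is enough to check two things: that $T_\excore$ is consistent, and that every formula of $L_\excore$ --- i.e.\ every strong equation $\eqs$ or weak equation $\eqw$ --- is $T_\excore$-equivalent to a set of formulas of the pure sublogic $L_\eqn$. For consistency I would appeal to soundness: the intended model refutes $\copa_P^\pure\circ\tagg^\ppg\eqs\id_P^\pure$, since the left-hand side is interpreted as a function taking values in the $E$ summand of $P+E$ while the right-hand side is the identity on $P$; hence this equation is not a theorem and $T_\excore\ne T_\mymax$.

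For the second and substantial part I would start from an arbitrary equation $e$ of $L_\excore$ and first bring both of its members into canonical form using Proposition~\ref{prop:excore-canonical}: up to strong equality, every term is pure, or a propagator $\copa_Y\circ\tagg\circ v$ with $v$ pure, or a catcher $a\circ\untag\circ\tagg\circ u$ with $a$ a propagator and $u$ pure. Replacing a member of $e$ by a strongly equal term does not change the $T_\excore$-equivalence class of $e$, so we may assume $e$ relates two terms in canonical form, and then argue by cases on their shapes. If one of the members is a catcher (the other being a catcher or a propagator, in particular a pure term), Points~\ref{prop:excore-equations-ctc-ctc} and~\ref{prop:excore-equations-ctc-ppg} of Proposition~\ref{prop:excore-equations} rewrite $e$, in its weak as well as its strong form, as a set of strong equations between propagators; so it remains to treat equations between two propagators in canonical form. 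Such an equation between two pure terms is already a formula of $L_\eqn$. Between two terms of the form $\copa_Y\circ\tagg\circ v_i$, Point~\ref{pt:excorec-equations-ppg-ppg} of Proposition~\ref{prop:excore-equations} together with Lemma~\ref{lemm:excore-ul} reduces it to a pure equation $v_1\eqs v_2$ (here one uses, as in Proposition~\ref{prop:excore-equations}, that $\copa_Y$ is a monomorphism with respect to propagators). Between a pure term and a term $\copa_Y\circ\tagg\circ v$, over a non-empty domain, Assumption~\ref{ass:excore-equations} makes $e$ $T_\excore$-equivalent to $T_\mymaxz$, hence to a single inconsistent pure equation.

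The step I expect to be the genuine obstacle --- apart from the length of this case analysis and the need to respect, at every rule application, the decoration side-conditions of (subs$_\eqw$) and (eq$_1$), which is exactly what the Coq verification is good at policing --- is the empty-domain case, which, unlike in $L_\exc$, is no longer automatic: a catcher $f^\ctc:\empt\to Y$ is interpreted as an essentially arbitrary function $E\to Y+E$, so terms out of $\empt$ are not all identified. A weak equation out of $\empt$ is harmless: rule (empty$_\eqw$) makes any such equation a theorem, hence $T_\excore$-equivalent to the empty set of pure equations. For a strong equation $f_1\eqs f_2:\empt\to Y$ I would use rule (eq$_3$) to replace it by the weak equation $f_1\circ\tagg\eqw f_2\circ\tagg:P\to Y$, whose domain $P$ is non-empty by Assumption~\ref{ass:excore-equations}; re-canonicalizing $f_1\circ\tagg$ and $f_2\circ\tagg$ then brings us back to one of the cases already handled. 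Gathering these reductions and invoking Corollary~\ref{coro:hpc-equations} gives the relative completeness of $T_\excore$; as a bonus, combined with Theorem~\ref{theo:hpc} it also transfers absolute Hilbert-Post completeness from $T_\eqn$ to $T_\excore$.
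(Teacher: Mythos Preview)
Your proposal is correct and follows essentially the same strategy as the paper: invoke Corollary~\ref{coro:hpc-equations}, establish consistency by soundness, put both members of an arbitrary equation into canonical form via Proposition~\ref{prop:excore-canonical}, and then reduce by cases using Proposition~\ref{prop:excore-equations} and Assumption~\ref{ass:excore-equations}. Your consistency witness ($\copa_P\circ\tagg\eqs\id_P$) differs from the paper's ($\untag\eqs\copa_P$), but either works.

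The one place where your route genuinely diverges is the empty-domain case for a strong equation involving a catcher. The paper simply observes that Points~\ref{prop:excore-equations-ctc-ctc} and~\ref{prop:excore-equations-ctc-ppg} of Proposition~\ref{prop:excore-equations} apply without any non-emptiness hypothesis on $X$, so it applies them directly when $X=\empt$: the resulting propagator equations have domain either $\empt$ (hence trivial) or $P$ (hence already covered). You instead use rule~(eq$_3$) to trade $f_1\eqs f_2:\empt\to Y$ for the weak equation $f_1\circ\tagg\eqw f_2\circ\tagg:P\to Y$ and then re-enter the non-empty case analysis. Both arguments are valid; the paper's is a little more direct since it avoids a second pass through the canonical-form machinery, while yours has the virtue of making explicit why the empty-domain case is not a genuine obstruction.
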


\begin{proof}
Using Corollary~\ref{coro:hpc-equations},
the proof is based upon Propositions~\ref{prop:excore-canonical} and
\ref{prop:excore-equations}. 
It follows the same lines as the proof of Theorem~\ref{theo:exc-complete}, 
except when $X$ is empty: because of catchers the proof here 
is slightly more subtle.
First, the theory $T_\excore$ is consistent, because (by soundness) 
it cannot be proved that $\untag^\ctc \eqs \copa_P^\pure$.
Now, let us consider an equation between terms $f_1,f_2:X\to Y$, 
and let us prove that it is $T_\excore$-equivalent to a set of pure equations.
When $X$ is non-empty, 
Propositions~\ref{prop:excore-canonical} and~\ref{prop:excore-equations},
together with Assumption~\ref{ass:excore-equations}, 
prove that the given equation is $T_\excore$-equivalent to a 
set of pure equations.
When $X$ is empty, then $f_1\eqw \copa_Y$ and $f_2\eqw \copa_Y$, 
so that if the equation is weak or if both $f_1$ and $f_2$ are propagators 
then the given equation is
$T_\excore$-equivalent to the empty set of equations between pure terms. 
When $X$ is empty and the equation is $f_1 \eqs f_2$ with 
at least one of $f_1$ and $f_2$ a catcher, then 
by Point~\ref{prop:excore-equations-ctc-ctc}  
or~\ref{prop:excore-equations-ctc-ppg} 
of Proposition~\ref{prop:excore-equations},
the given equation is $T_\excore$-equivalent to a set of 
equations between propagators; 
but we have seen that each equation between propagators
(whether $X$ is empty or not) is 
$T_\excore$-equivalent to a set of equations between pure terms,
so that $f_1\eqs f_2$ is $T_\excore$-equivalent to the union of 
these sets of pure equations.
\end{proof}

\section{Verification of Hilbert-Post Completeness in Coq} 
\label{sec:coq}

All the statements of Sections~\ref{sec:exc} and~\ref{sec:excore}  
have been checked in Coq. 
The proofs can be found in 
\url{http://forge.imag.fr/frs/download.php/680/hp-0.7.tar.gz}, 
as well as an almost dual proof for the completeness of the state.
They share the same framework, defined  in~\cite{DDEP13-coq}: 
\begin{enumerate}
\item the terms of each logic are inductively
defined through the dependent type named $\mathtt{term}$ which builds
a new \texttt{Type} out of two input \texttt{Type}s. 
For instance, $\mathtt{term \ Y \ X}$ is the \texttt{Type} of all terms 
of the form $\mathtt{f\colon X\to Y}$;
\item the decorations are enumerated: 
\texttt{pure} and \texttt{propagator} for both languages,  
and \texttt{catcher} for the core language;
\item decorations are inductively assigned to the terms via the
dependent type called $\mathtt{is}$. The latter builds a proposition
(a \texttt{Prop} instance in Coq) out of a \texttt{term} and a
decoration. Accordingly, \texttt{is pure (id X)} is a \texttt{Prop}
instance;
\item for the core language, 
we state the rules with respect to weak and strong equalities
by defining them in a mutually inductive way.
\end{enumerate}

The completeness proof for the exceptions core language
is 950 SLOC in Coq where it is 460 SLOC in \LaTeX.
Full certification runs in 6.745s on a Intel i7-3630QM @2.40GHz
using the Coq Proof Assistant, v. 8.4pl3.
Below table details the proof lengths and timings for each library. 

\begin{center}
\begin{tabular}{ |l||l|l|l|l|}
 \hline
 \multicolumn{5}{|c|}{Proof lengths \& Benchmarks} \\
 \hline
 package    & source  & length  & length  & execution time \\
   &   &  in Coq  & in \LaTeX  & in Coq \\
  \hline
 exc\_cl-hp  \;  & HPCompleteCoq.v \;  &40 KB&   15 KB&6.745 sec. \\
 exc\_pl-hp &   HPCompleteCoq.v  & 8 KB   &6 KB&1.704 sec. \\
 exc\_trans & Translation.v & 4 KB&  2 KB&1.696 sec. \\
 st-hp & HPCompleteCoq.v & 48 KB&  15 KB &7.183 sec.\\
 \hline
\end{tabular}
\end{center}

The correspondence between the propositions and theorems
in this paper and their proofs in Coq is given in Fig.~\ref{fig:coq-table},
and the dependency chart for the main results in Fig.~\ref{fig:coq-chart}. 
For instance, 
Proposition~\ref{prop:exc-equations} is expressed in Coq as: 
\scriptsize
\begin{verbatim}
forall {X Y} (a1 a2: term X Y) (v1 v2: term (Val e) Y),
    (is pure v1) /\ (is pure v2) /\
    (a1 = ((@throw X e) o v1)) /\ (a2 = ((@throw X e) o v2)) -> ((a1 == a2) <-> (v1 == v2)).
\end{verbatim}
\normalsize

\begin{figure}[ht]

\renewcommand{\arraystretch}{1}
\begin{center}
\begin{tabular}{ |l|l|}
\hline
\multicolumn{2}{|c|}{ hp-0.7/exc$\_$trans/Translation.v} \\
\hline
Proposition~\ref{prop:excore-impl} (propagate) & \texttt{propagate} \\
Proposition~\ref{prop:excore-impl} (recover) & \texttt{recover} \\
Proposition~\ref{prop:excore-impl} (try) & \texttt{try} \\
Proposition~\ref{prop:excore-impl} (try$_0$) & \texttt{try$_0$} \\
Proposition~\ref{prop:excore-impl} (try$_1$) & \texttt{try$_1$} \\
\hline
\end{tabular}  \\
\vspace{15pt}
\begin{tabular}{ |l|l|}
\hline
\multicolumn{2}{|c|}{ hp-0.7/exc$\_$pl-hp/HPCompleteCoq.v} \\
\hline
Proposition~\ref{prop:exc-canonical} \; & \texttt{can$\_$form$\_$th} \\
Proposition~\ref{prop:exc-equations} %~Point~\ref{pt:exc-equations-ppg-ppg}
& \texttt{eq$\_$th$\_$1$\_$eq$\_$pu} \\
Assumption~\ref{ass:exc-equations} \; & \texttt{eq$\_$th$\_$pu$\_$abs} \\
%% Proposition~\ref{prop:exc-equations}~Point~\ref{pt:exc-equations-ppg-pure}
%% & \texttt{eq$\_$th$\_$pu$\_$abs} \\
Theorem~\ref{theo:exc-complete} & \texttt{HPC$\_$exc$\_$pl} \\
\hline
\end{tabular}\\

\begin{tabular}{ |l|l|}
\multicolumn{2}{c}{ \null } \\
\hline
\multicolumn{2}{|c|}{ hp-0.7/exc$\_$cl-hp/HPCompleteCoq.v} \\
\hline
Proposition~\ref{prop:excore-canonical}~Point~\ref{pt:excore-canonical-acc}
& \texttt{can$\_$form$\_$pr} \\
Proposition~\ref{prop:excore-canonical}~Point~\ref{pt:excore-canonical-ctc}
& \texttt{can$\_$form$\_$ca} \\
Assumption~\ref{ass:excore-equations} & \texttt{eq$\_$pr$\_$pu$\_$abs} \\
Proposition~\ref{prop:excore-equations}~Point~\ref{prop:excore-equations-ctc-ctc} & \texttt{eq$\_$ca$\_$2$\_$eq$\_$pr} \\
Proposition~\ref{prop:excore-equations}~Point~\ref{prop:excore-equations-ctc-ppg} & \texttt{eq$\_$ca$\_$pr$\_$2$\_$eq$\_$pr} \\
Proposition~\ref{prop:excore-equations}~Point~\ref{pt:excorec-equations-ppg-ppg} & \texttt{eq$\_$pr$\_$1$\_$eq$\_$pu} \\
%% Proposition~\ref{prop:excore-equations}~Point~\ref{pt:excore-equations-ppg-pure} & \texttt{eq$\_$pr$\_$pu$\_$abs} \\
Theorem~\ref{theo:excore-complete} & \texttt{HPC$\_$exc$\_$core} \\
%\hline
%& hp-0.3/st-hp/HPCompleteCoq.v \\
%\hline
%Proposition~\ref{prop:sta-canonical}~Point~\ref{pt:sta-canonical-acc}
%& \texttt{can$\_$form$\_$ac} \\
%Proposition~\ref{prop:sta-canonical}~Point~\ref{pt:sta-canonical-modi}
%& \texttt{can$\_$form$\_$mo} \\
%Proposition~\ref{prop:sta-equations}~Point~\ref{prop:sta-equations-modi-modi} & \texttt{eq$\_$mo$\_$2$\_$eq$\_$ac} \\
%Proposition~\ref{prop:sta-equations}~Point~\ref{prop:sta-equations-modi-acc} & \texttt{eq$\_$mo$\_$ac$\_$2$\_$eq$\_$ac} \\
%Proposition~\ref{prop:sta-equations}~Point~\ref{pt:sta-equations-acc-acc} & \texttt{eq$\_$ac$\_$1$\_$eq$\_$pu} \\
%Proposition~\ref{prop:sta-equations}~Point~\ref{pt:sta-equations-acc-pure} & \texttt{eq$\_$ac$\_$pu$\_$2$\_$eq$\_pu$} \\
%Theorem~\ref{theo:sta-complete} & \texttt{HPC$\_$st} \\
\hline
\end{tabular}
\end{center}
\renewcommand{\arraystretch}{1}
\vspace{-15pt}
\caption{Correspondence between theorems in this paper and
  their Coq counterparts} 
\label{fig:coq-table}
\end{figure}
%\vspace{15pt}
\begin{figure}[H]
\begin{framed}
\hspace{-5mm} 
\small
$ \xymatrix@C=.2pc@R=-.3pc{
\mathtt{can\_form\_ca} \ar[dr] &\\
& \mathtt{eq\_ca\_1\_or\_2\_eq\_pr} \ar[ddr] \ar[ddddddr]\\
\mathtt{eq\_ca\_pr\_2\_eq\_pr} \ar[ur] &\\
& & \mathtt{eq\_ca\_abs\_or\_2\_eq\_pu} \ar[ddr]\\
\mathtt{can\_form\_pr} \ar[dr]\\
\mathtt{eq\_pr\_1\_eq\_pu} \ar[r] & \mathtt{eq\_pr\_abs\_or\_1\_eq\_pu} \ar[uur] \ar[ddr] & &\mathtt{HPC\_exc}\\
\mathtt{eq\_pr\_pu\_abs} \ar[ur]&\\
& & \mathtt{eq\_ca\_abs\_2\_eq\_pu\_dom\_emp}\ar[uur]\\
& \mathtt{eq\_pr\_dom\_emp}\ar[ur]\\
} $
\normalsize
\vspace{-2mm}
\end{framed}
\vspace{-15pt}
\caption{Dependency chart for the main results} 
\label{fig:coq-chart}
\end{figure}

\section{Conclusion and future work}

This paper is a first step towards the proof of completeness 
of decorated logics for computer languages. 
It has to be extended in several directions:  
adding basic features to the language (arity, conditionals, loops, \dots), 
proving completeness of the decorated approach for other effects 
(not only states and exceptions);
the combination of effects should easily follow, 
thanks to Proposition~\ref{prop:hpc-compose}.

\newpage 

\appendix

\section{Completeness for states}
\label{app:sta}

Most programming languages such as C/C++ and Java support the usage
and manipulation of the state (memory) structure. Even though the
state structure is never syntactically mentioned, the commands are
allowed to use or manipulate it, for instance looking up or updating
the value of variables. This provides a great flexibility in
programming, but in order to prove the correctness of programs, one
usually has to revert to an explicit manipulation of the state.
Therefore, any access to the state, regardless of usage or
manipulation, is treated as a computational effect: a
syntactical term $f: X\to Y$ is not interpreted as $f:X\to Y$ 
unless it is {\em pure}, that
is unless it does not use the variables in any manner.  
Indeed, a term which updates the state has instead the following
interpretation: $f: X \times S\to Y \times S$ 
where \lq{}$\times$\rq{} is the product operator and $S$ is the set
of possible states. In~\cite{DDEP13-coq}, we proposed a proof system to
prove program properties involving states effect, while keeping the
memory manipulations implicit. 
We summarize this system next and prove its Hilbert-Post completeness
in Theorem~\ref{theo:sta-complete}.

As noticed in~\cite{DDER14-exc}, the logic $L_\excore$ is exactly 
dual to the logic $L_\sta$ for states (as reminded below).  
Thus, the dual of the completeness Theorem~\ref{theo:excore-complete}
and of all results in Section~\ref{sec:excore} are valid, with the dual proof. 
However, the intended models for exceptions and for states 
rely on the category of sets, which is not self-dual, 
and the additional assumptions in Theorem~\ref{theo:excore-complete},
like the existence of a boolean type,  
cannot be dualized without loosing the soundness of the logic 
with respect to its intended interpretation. 
It follows that the completeness Theorem~\ref{theo:sta-complete}
for the theory for states is not exactly the dual of 
Theorem~\ref{theo:excore-complete}. 
In this Appendix, for the sake of readability, 
we give all the details of the proof of Theorem~\ref{theo:sta-complete}; 
we will mention which parts are \emph{not} the dual of the corresponding 
parts in the proof of Theorem~\ref{theo:excore-complete}.

As in \cite{DDFR12-state}, decorated logics for states 
are obtained from equational logics by classifying terms and equations. 
Terms are classified as \emph{pure} terms, \emph{accessors} 
or \emph{modifiers}, which is expressed by adding a \emph{decoration} 
or superscript, respectively $\pure$, $\acc$ and $\modi$;
decoration and type information about terms 
may be omitted when they are clear from the context
or when they do not matter. 
Equations are classified as \emph{strong} or \emph{weak} equations,
denoted respectively by the symbols $\eqs$ and $\eqw$.
Weak equations relates to the values returned by programs, while
strong equations relates to both values and side effects.
In order to observe the state, accessors may use the values 
stored in \emph{locations}, and modifiers may update these values. 
In order to focus on the main features of the 
proof of completeness, let us assume that 
only one location can be observed and modified; 
the general case, with an arbitrary number of locations, is considered 
in Remark~\ref{rem:sta-complete}.
The logic for dealing with pure terms may be any logic 
which extends a monadic equational logic with constants $L_\eqnunit$; 
its terms are decorated as pure and its equations are strong. 
This \emph{pure sublogic} $L_\sta^\pure$ is extended to form the
corresponding \emph{decorated logic for states} $L_\sta$. 
The rules for $L_\sta$ are given in Fig.~\ref{fig:sta}. 
A theory $T^\pure$ of $L_\sta^\pure$ is chosen,
then the \emph{theory of states} $T_\sta$ is the theory of $L_\sta$ 
generated from $T^\pure$. 
Let us now discuss the logic $L_\sta$ and 
its intended interpretation in sets; 
it is assumed that some model of the pure subtheory $T^\pure$ in sets 
has been chosen; 
the names of the rules refer to Fig.~\ref{fig:sta}. 

Each type $X$ is interpreted as a set, denoted $X$.
The intended model is described with respect to a set $S$ 
called the \emph{set of states}, which does not appear in the syntax.  
A pure term $u^\pure:X\to Y$ 
is interpreted as a function $u:X\to Y$, 
an accessor $a^\acc:X\to Y$
as a function $a:S\times X\to Y$,
and a modifier $f^\modi:X\to Y$
as a function $f:S\times X\to S\times Y$. 
There are obvious conversions from pure terms to accessors
and from accessors to modifiers,
which allow to consider all terms as modifiers whenever needed;
for instance, this allows to interpret the composition of terms 
without mentioning Kleisli composition; the complete characterization
is given in~\cite{DDFR12-state}.  

Here,  for the sake of simplicity, we consider a
single variable (as done, e.g., in~\cite{Pretnar10}
and~\cite{Staton10-fossacs}),
and dually to the choice of a unique exception name in 
Section~\ref{sec:excore}. 
See Remark~\ref{rem:sta-complete} for the 
generalization to an arbitrary number of variables.
The values of the unique location have type $V$.
The fundamental operations for dealing with the state
are the accessor $\lookup^\acc:\unit\to V$ 
for reading the value of the location and the modifier 
$\update^\modi:V\to\unit$ for updating this value.
According to their decorations, they are interpreted respectively 
as functions $\lookup:S\to V$ and $\update:S\times V\to S$. 
Since there is only one location, it might be assumed that 
$\lookup:S\to V$ is a bijection 
and that $\update:S\times V\to S$ maps each $(s,v)\in S\times V$ 
to the unique $s'\in S$ such that $\lookup(s')=v$:
this is expressed by a weak equation, as explained below. 

A strong equation $f\eqs g$ means that $f$ and $g$ 
return the same result and modify the state in ``the same way'', 
which means that no difference can be observed between 
the side-effects performed by $f$ and by $g$.
Whenever $\lookup:S\to V$ is a bijection, 
a strong equation $f^\modi\eqs g^\modi:X\to Y$ 
is interpreted as the equality $f=g:S\times X\to S\times Y$:
for each $(s,x)\in S\times X$, let $f(s,x)=(s',y')$ and 
$g(s,x)=(s'',y'')$, then $f\eqs g$ means that $y'=y''$ and $s'=s''$
for all $(s,x)$.
Strong equations form a congruence. 
A weak equation $f\eqw g$ means that $f$ and $g$ 
return the same result although they may modify the state in different ways.
Thus, a weak equation $f^\modi\eqw g^\modi:X\to Y$ 
is interpreted as the equality 
$\pr_Y\circ f=\pr_Y\circ g:S\times X\to Y$,
where $\pr_Y:S\times Y\to Y$ is the projection; 
with the same notations as above, this means that $y'=y''$ for all $(s,x)$. 
Weak equations do not form a congruence: 
the replacement rule holds only when the replaced term is pure. 
The fundamental equation for states is provided by rule (ax): 
  $\lookup^\acc \circ\update^\modi\eqw \id_V$. 
This means that updating the location with a value $v$ 
and then observing the value of the location does return $v$.
Clearly this is only a weak equation: 
its right-hand side does not modify the state while its 
left-hand side usually does. 
There is an obvious conversion from strong to weak equations ($\eqs$-to-$\eqw$),
and in addition strong and weak equations coincide on accessors 
by rule (eq$_1$). 
Two modifiers $f_1^\modi,f_2^\modi:X\to Y$ modify the state in the same way if 
and only if $\pa_Y\circ f_1 \eqs \pa_Y\circ f_2:X\to \unit$, where
$\pa_Y:Y\to\unit$ throws out the returned value.
Then weak and strong equations are related by the property that 
$f_1\eqs f_2$ if and only if 
$f_1\eqw f_2$ and $\pa_Y\circ f_1 \eqs \pa_Y\circ f_2$, by rule (eq$_2$).
This can be expressed as a pair of weak equations  
$f_1\eqw f_2$ and 
$\lookup\circ \pa_Y\circ f_1 \eqw \lookup\circ \pa_Y\circ f_2$, 
by rule (eq$_3$).
Some easily derived properties are stated in Lemma~\ref{lemm:sta-ul};
Point~\ref{pt:sta-unit} will be used repeatedly.

\begin{figure}[ht]
\begin{framed}
\small
\begin{tabular}{l} 
\textbf{Monadic equational logic with constants $L_\eqnunit$:} \\ 
Types and terms: as for monadic equational logic, plus \\
\quad a unit type $\unit$ and 
a term $\pa_X:X\to \unit$ for each $X$ \\
Rules: as for monadic equational logic, plus 
\quad (unit) \squad 
  $\dfrac{f\colon X\to \unit}{f \eqs \pa_X}$ \\ 
\hline
\textbf{Decorated logic for states $L_\sta$:}  \\ 
Pure part: some logic $L_\sta^\pure$ extending $L_\eqnunit$, 
with a distinguished type $V$ \\ 
Decorated terms: $\lookup^\acc \colon \unit\to V$, 
$\update^\modi \colon V\to \unit$, and \\ 
\quad $(f_k\circ \dots \circ f_1)^{(\max(d_1,...,d_k))}:X_0\to X_k$ 
for each $(f_i^{(d_i)}:X_{i-1}\to X_i)_{1\leq i\leq k}$ \\  
\quad with conversions from $f^\pure$ to $f^\acc$ 
and from $f^\acc$ to $f^\modi$ \\ 
Rules:  \\ 
\quad (equiv$_\eqs$), (subs$_\eqs$), (repl$_\eqs$) for all decorations  \\ 
\quad (equiv$_\eqw$), (subs$_\eqw$) for all decorations, 
  (repl$_\eqw$) only when $h$ is pure \\ 
\quad (unit$_\eqw$) \squad 
  $\dfrac{f\colon X\to \unit}{f \eqw \pa_X}$ 
\quad ($\eqs$-to-$\eqw$) \squad 
  $\dfrac{f\eqs g}{f\eqw g} $  
\quad (ax) \squad 
  $\dfrac{}{\lookup \circ \update \eqw \id_V} $  \\  
\quad (eq$_1$) \squad 
  $\dfrac{f_1^{(d_1)}\eqw f_2^{(d_2)}}{f_1\eqs f_2}$ 
  only when $d_1\leq 1$ and $d_2\leq 1$ \\
\quad (eq$_2$) \squad 
  $\dfrac{f_1,f_2\colon X\to Y \;\;
  f_1\eqw f_2 \;\; \pa_Y\circ f_1 \eqs \pa_Y\circ f_2 }{f_1\eqs f_2}$ \\ 
\quad (eq$_3$) \squad 
  $\dfrac{f_1,f_2\colon X\to \unit \quad \lookup\circ f_1 \eqw \lookup\circ f_2}
    {f_1\eqs f_2}$ \\ 
\end{tabular}
\normalsize
\vspace{-2mm}
\end{framed}
\vspace{-5pt}
\caption{Decorated logic for states (dual to Fig.~\ref{fig:excore}) } 
\label{fig:sta} 
\end{figure}

\begin{lemma} 
\label{lemm:sta-ul}
\begin{enumerate}
\item \label{pt:sta-ul-ul} 
$ \update\circ\lookup \eqs \id_{\unit} $.
(this is the fundamental strong equation for states).
\item \label{pt:sta-unit} 
each $f^\modi \colon \unit\to \unit$ is such that $f \eqw \id_\unit$,
each $f^\acc \colon X\to \unit$ is such that $f \eqs \pa_X $, 
and each $f^\acc \colon \unit\to \unit$  is such that $f \eqs \id_\unit$.
\item \label{pt:sta-ul-lulu}
For all pure terms $u_1^\pure,u_2^\pure:V\to Y$, one has:  
  $ u_1 \eqs u_2  $ is $T_\sta$-equivalent to 
$ u_1 \circ \lookup \eqs u_2 \circ \lookup $ and also to 
$ u_1 \circ \lookup \circ \update \eqs u_2 \circ \lookup\circ \update $.
\item \label{pt:sta-ul-l}
For all pure terms $u^\pure:V\to Y$, $v^\pure:\unit\to Y$, one has:
  $ u \eqs v \circ \pa_V $ is $T_\sta$-equivalent to 
  $ u \circ \lookup \eqs v $.
\end{enumerate}
\end{lemma}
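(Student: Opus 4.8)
\textbf{Proof plan for Lemma~\ref{lemm:sta-ul}.}

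The plan is to prove the four points in order, each one feeding into the next, exactly as in the dual Lemma~\ref{lemm:excore-ul} but working with $\lookup$, $\update$, $\pa_X$ in place of $\tagg$, $\untag$, $\copa_X$. For Point~\ref{pt:sta-ul-ul}, I would observe that $\update\circ\lookup:\unit\to\unit$ is a modifier; by rule (unit$_\eqw$) it satisfies $\update\circ\lookup\eqw\id_\unit$; and then by rule (eq$_3$) with $f_1=\update\circ\lookup$, $f_2=\id_\unit$ it suffices to check $\lookup\circ\update\circ\lookup\eqw\lookup\circ\lookup$, which follows from the axiom (ax) $\lookup\circ\update\eqw\id_V$ by (repl$_\eqw$) (replacing along the pure… no — by (subs$_\eqw$), precomposing the weak equation $\lookup\circ\update\eqw\id_V$ with $\lookup$). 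Hence $\update\circ\lookup\eqs\id_\unit$. This is the dual of the derivation of the fundamental strong equation~(\ref{eq:excore-fundamental}).

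For Point~\ref{pt:sta-unit}: given $f^\modi:\unit\to\unit$, rule (unit$_\eqw$) gives $f\eqw\pa_\unit\eqw\id_\unit$ (since $\id_\unit:\unit\to\unit$ is also handled by (unit$_\eqw$) and $\eqw$ is transitive). Given $f^\acc:X\to\unit$, rule (unit$_\eqw$) gives $f\eqw\pa_X$, and since $f$ and $\pa_X$ are both accessors, rule (eq$_1$) upgrades this to $f\eqs\pa_X$. The third clause is the special case $X=\unit$ combined with $\pa_\unit\eqs\id_\unit$ (both accessors, weakly equal by (unit$_\eqw$), hence strongly equal by (eq$_1$)).

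For Point~\ref{pt:sta-ul-lulu}: the left-to-right implications are immediate from (subs$_\eqs$). Conversely, if $u_1\circ\lookup\circ\update\eqs u_2\circ\lookup\circ\update$ then a fortiori as weak equations, and using (ax) with (repl$_\eqw$) (since $u_i$ is pure, replacement along $\lookup\circ\update\eqw\id_V$ is licit) we get $u_1\eqw u_2$, which is strong since $u_1,u_2$ are pure, hence accessors, by (eq$_1$) — or directly because pure equations are strong. The intermediate form $u_1\circ\lookup\eqs u_2\circ\lookup$ is squeezed between the two. For Point~\ref{pt:sta-ul-l}: using Point~\ref{pt:sta-ul-ul}, $\lookup\circ\pa_V:\unit\to\unit$ is an accessor so $\lookup\circ\pa_V\eqs\id_\unit$ by Point~\ref{pt:sta-unit}; then $u\eqs v\circ\pa_V$ gives $u\circ\lookup\eqs v\circ\pa_V\circ\lookup$; but $\pa_V\circ\lookup:\unit\to\unit$… wait, $\lookup:\unit\to V$ and $\pa_V:V\to\unit$, so $\pa_V\circ\lookup:\unit\to\unit$ is an accessor, equal to $\id_\unit$ by Point~\ref{pt:sta-unit}, giving $u\circ\lookup\eqs v$; conversely from $u\circ\lookup\eqs v$ we get $u\circ\lookup\circ\update\eqs v\circ\update$, and since $\lookup\circ\update\eqw\id_V$ and $u$ is pure, $u\eqw v\circ\update$, hence… here I must be careful, as $v\circ\update$ is a modifier, not pure; I would instead argue: $u\circ\lookup\eqs v$ implies $u\circ\lookup\circ\pa_\unit\eqs v\circ\pa_\unit$, i.e. $u\circ\pa_V\eqs v\circ\pa_V$ (using $\lookup\circ\pa_\unit=\pa_V$ up to the unit-type equation), and then compose the original with $\pa_V$ on the right — actually the cleanest is to mirror exactly the proof of Lemma~\ref{lemm:excore-ul} Point~\ref{pt:excore-ul-l} via the dual of the fundamental equation. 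The main obstacle, as this last point shows, is keeping the decoration bookkeeping straight: which composites land in $\unit$ (so Point~\ref{pt:sta-unit} applies), which terms are pure (so (repl$_\eqw$)/(subs$_\eqw$) and the pure$\Rightarrow$strong upgrade apply), and making sure every use of (repl$_\eqw$) respects the ``$h$ pure'' side condition. Once those are tracked correctly, each step is a one-line application of a rule from Fig.~\ref{fig:sta}.
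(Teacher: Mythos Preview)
Your plan is essentially the paper's proof, and Points~\ref{pt:sta-ul-ul}--\ref{pt:sta-ul-lulu} are fine. Two small clean-ups: in Point~\ref{pt:sta-ul-ul} the appeal to (unit$_\eqw$) is superfluous, since (eq$_3$) only requires $\lookup\circ f_1\eqw\lookup\circ f_2$, which you already get from (subs$_\eqw$) on (ax); and your ``$\lookup\circ\lookup$'' is a slip for $\lookup\circ\id_\unit=\lookup$ (the composite $\lookup\circ\lookup$ does not typecheck).

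The only real gap is the converse of Point~\ref{pt:sta-ul-l}. Your two written attempts do not close (as you note, $v\circ\update$ is a modifier, and the $\pa_\unit$ manoeuvre mixes up domains and codomains), and ``mirror the dual'' is the right instinct but you should state what it actually gives. The paper's argument, which is exactly that dual, is one line: from $u\circ\lookup\eqs v$ and $\pa_V\circ\lookup\eqs\id_\unit$ (Point~\ref{pt:sta-unit}) you get $u\circ\lookup\eqs (v\circ\pa_V)\circ\lookup$, and since $u$ and $v\circ\pa_V$ are both pure $V\to Y$, Point~\ref{pt:sta-ul-lulu} yields $u\eqs v\circ\pa_V$. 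No $\update$ is needed at all.
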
 

\begin{proof}
\begin{enumerate}
\item %
By substitution in the axiom (ax) 
we get $\lookup \circ \update \circ \lookup \eqw \lookup $;
then by rule (eq$_3$) $\update\circ\lookup \eqs \id_\unit$.
\item %
Clear.
\item %
Implications from left to right are clear. Conversely, if 
$u_1 \circ \lookup \circ \update \eqs u_2 \circ \lookup\circ \update$,  
then using the axiom (ax) and the rule (repl$_\eqw$) we get $u_1 \eqw u_2$. 
Since $u_1$ and $u_2$ are pure this means that $u_1 \eqs u_2$. 
\item %
First, since $ \pa_V \circ \lookup:\unit\to\unit $ is an accessor  
we have $ \pa_V \circ \lookup \eqs \id_\unit$. 
Now, if $u \eqs v \circ \pa_V$ then 
$u \circ \lookup \eqs v \circ \pa_V \circ \lookup$,
so that $u \circ \lookup \eqs v$.
Conversely, if $u \circ \lookup \eqs v$ then 
$u \circ \lookup \eqs v \circ \pa_V \circ \lookup$, 
and by Point~(\ref{pt:sta-ul-lulu}) this means that $u \eqs v \circ \pa_V$.
\end{enumerate}
\end{proof}

Our main result is Theorem~\ref{theo:sta-complete} about the 
relative Hilbert-Post completeness of the decorated theory of states 
under suitable assumptions. 

\begin{proposition} 
\label{prop:sta-canonical} 
\begin{enumerate}
\item \label{pt:sta-canonical-acc} 
For each accessor $a^\acc:X\to Y$, either $a$ is pure or 
there is a pure term $v^\pure:V\to Y$ such that 
$ a^\acc \eqs v^\pure\circ \lookup^\acc\circ \pa_X^\pure $. 
\\ For each accessor $a^\acc:\unit\to Y$ (either pure or not), 
there is a pure term $v^\pure:V\to Y$ such that 
$ a^\acc \eqs v^\pure\circ \lookup^\acc $. 
\item \label{pt:sta-canonical-modi} 
For each modifier $f^\modi:X\to Y$, either $f$ is an accessor or 
there is an accessor $a^\acc:X\to V$ and a pure term $u^\pure:V\to Y$ 
such that 
$ f^\modi \eqs u^\pure\circ \lookup^\acc\circ \update^\modi \circ a^\acc $.
\end{enumerate}
\end{proposition}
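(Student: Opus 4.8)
The plan is to use the duality between $L_\sta$ and $L_\excore$ — $\lookup^\acc\leftrightarrow\tagg^\ppg$, $\update^\modi\leftrightarrow\untag^\ctc$, accessor $\leftrightarrow$ propagator, modifier $\leftrightarrow$ catcher, $\pa_X\leftrightarrow\copa_X$, $\unit\leftrightarrow\empt$, with composition reversed and the two kinds of equation preserved. Under this duality Proposition~\ref{prop:sta-canonical} is precisely the dual of Proposition~\ref{prop:excore-canonical}, so it suffices to transport that proof; I spell it out directly.

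For Point~\ref{pt:sta-canonical-acc} no induction is needed. Syntactically an accessor is a composite of pure terms and copies of $\lookup^\acc$ (it contains no $\update$, otherwise it would be a modifier). If $a^\acc:X\to Y$ is not pure, I single out its \emph{last} occurrence of $\lookup$ and write $a=v\circ\lookup\circ c$, where $v^\pure:V\to Y$ is the $\lookup$-free — hence pure — part following that occurrence and $c^\acc:X\to\unit$ the part preceding it; by Lemma~\ref{lemm:sta-ul}, Point~\ref{pt:sta-unit}, $c\eqs\pa_X$, whence $a\eqs v^\pure\circ\lookup^\acc\circ\pa_X^\pure$. When moreover $X=\unit$: if $a$ is impure the same computation gives $a\eqs v\circ\lookup\circ\pa_\unit\eqs v\circ\lookup$ since $\pa_\unit\eqs\id_\unit$; and if $a$ is pure I take $v=a\circ\pa_V$, so that $v\circ\lookup=a\circ\pa_V\circ\lookup\eqs a$ because $\pa_V\circ\lookup:\unit\to\unit$ is an accessor, hence $\eqs\id_\unit$ (again Lemma~\ref{lemm:sta-ul}, Point~\ref{pt:sta-unit}).

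For Point~\ref{pt:sta-canonical-modi} I proceed by structural induction on the modifier $f$. If $f$ is an accessor we are done; otherwise $f$ contains at least one $\update$, and I peel off its maximal pure suffix, writing $f=u\circ\mathtt{op}\circ g$ with $u$ pure, $\mathtt{op}\in\{\lookup,\update\}$ and $g$ a strictly smaller modifier. Applying the induction hypothesis to $g$ gives four cases, dual to those in the proof of Proposition~\ref{prop:excore-canonical}. Three are direct: if $\mathtt{op}=\lookup$ and $g$ is an accessor then $f$ is again an accessor; if $\mathtt{op}=\update$ and $g$ is an accessor then, $u:\unit\to Y$ being pure, Point~\ref{pt:sta-canonical-acc} rewrites $u$ as $u'\circ\lookup$ and $f\eqs u'\circ\lookup\circ\update\circ g$ has the required form; if $\mathtt{op}=\lookup$ and $g\eqs v\circ\lookup\circ\update\circ b$ already, then a pure term into $\unit$ composed with $\lookup$ gives an accessor $\unit\to\unit$, hence equals $\id_\unit$, and again $f$ takes the required form. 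The remaining case, $\mathtt{op}=\update$ with $g\eqs v\circ\lookup\circ\update\circ b$, is the delicate one: I derive, from the axiom (ax) and the rules (repl$_\eqw$) and (subs$_\eqw$), the weak equations $v\circ\lookup\circ\update\eqw v$, $\lookup\circ\update\circ v\eqw v$ and $\lookup\circ\update\circ v\circ\lookup\circ\update\eqw v\circ\lookup\circ\update$; combining them yields $\lookup\circ\update\circ v\circ\lookup\circ\update\eqw\lookup\circ\update\circ v$, which rule (eq$_3$) promotes to the strong equation $\update\circ v\circ\lookup\circ\update\eqs\update\circ v$; substituting it and using Point~\ref{pt:sta-canonical-acc} once more on the pure term $u:\unit\to Y$ gives $f\eqs u'\circ\lookup\circ\update\circ a$ with $a=v\circ b$ an accessor.

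The main obstacle is exactly this last case: one must track weak versus strong equations carefully (recalling that in $L_\sta$ it is (repl$_\eqw$), not (subs$_\eqw$), that is restricted to pure terms), invoke rule (eq$_3$) to upgrade $\eqw$ to $\eqs$, and check that the syntactic decomposition ``first non-pure operation seen from the target'' is well defined, every $\lookup$-free subterm of an accessor being genuinely pure. Everything else — the remaining cases and Point~\ref{pt:sta-canonical-acc} — is routine rewriting with the identities of Lemma~\ref{lemm:sta-ul}. This is also the place where one should verify that the argument is the literal dual of the corresponding case of Proposition~\ref{prop:excore-canonical}.
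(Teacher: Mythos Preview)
Your proof is correct and follows essentially the same approach as the paper's: the same last-$\lookup$ decomposition for Point~\ref{pt:sta-canonical-acc}, and the same structural induction with the identical four-case split for Point~\ref{pt:sta-canonical-modi}, including the key use of (ax), (repl$_\eqw$), (subs$_\eqw$) and (eq$_3$) in the $\mathtt{op}=\update$ case. You are in fact slightly more explicit than the paper in the final step of that case, where you apply Point~\ref{pt:sta-canonical-acc} to the pure $u:\unit\to Y$ to reach the canonical form, whereas the paper stops at $f\eqs u\circ\update\circ(v\circ b)$ and leaves that rewrite implicit.
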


\begin{proof}
\begin{enumerate}
\item %
If the accessor $a^\acc:X\to Y$ is not pure then it contains 
at least one occurrence of $\lookup^\acc$. 
Thus, it can be written in a unique way as 
$ a = v\circ \lookup\circ b$ for some pure term 
$v^\pure:V\to Y$ and some accessor $b^\acc:X\to \unit$.
Since $b^\acc:X\to \unit$ we have $b^\acc\eqs\pa_X^\pure$, 
and the first result follows.
When $X=\unit$, it follows that $a^\acc \eqs v^\pure\circ \lookup^\acc $.
When $a:\unit\to Y$ is pure, one has 
$a \eqs (a\circ\pa_V)^\pure\circ \lookup^\acc$. 
\item %
The proof proceeds by structural induction. 
If $f$ is pure the result is obvious, otherwise 
$f$ can be written in a unique way as  
$f = u \circ \mathtt{op} \circ g$ where $u$ is pure, $\mathtt{op}$ 
is either $\lookup$ or $\update$ and $g$ is the remaining part of $f$.
By induction, either $g$ is an accessor or 
$g \eqs v\circ \lookup\circ \update \circ b$ 
for some pure term $v$ and some accessor $b$.
So, there are four cases to consider.
\begin{itemize}
\item If $\mathtt{op}=\lookup$ and $g$ is an accessor then 
$f$ is an accessor. 
\item If $\mathtt{op}=\update$ and $g$ is an accessor then 
by Point~\ref{pt:sta-canonical-acc} 
there is a pure term $w$ such that $u \eqs w\circ \lookup$,
so that $f \eqs w^\pure\circ \lookup\circ \update\circ g^\acc$. 
\item If $\mathtt{op}=\lookup$ and 
$g \eqs v^\pure\circ \lookup\circ \update \circ b^\acc$ then 
$f \eqs u \circ \lookup \circ v \circ \lookup\circ \update \circ b$. 
Since $v: V\to \unit$ is pure we have $v\circ \lookup \eqs \id_\unit$,
so that $f \eqs u^\pure \circ \lookup \circ \update \circ b^\acc$. 
\item If $\mathtt{op}=\update$ and 
$g \eqs v^\pure\circ \lookup\circ \update \circ b^\acc$ then 
$f \eqs u^\pure \circ \update \circ v^\pure\circ \lookup\circ 
\update \circ b^\acc$. 
Since $v$ is pure, by (ax) and (repl$_\eqw$) we have 
$ v\circ \lookup\circ \update \eqw v $. 
Besides, by (ax) and (subs$_\eqw$) we have 
$\lookup\circ \update\circ v \eqw v $ and 
$\lookup\circ \update\circ v \circ \lookup \circ \update \eqw 
v \circ \lookup \circ \update$. %
Since $\eqw$ is an equivalence relation these three weak equations imply 
$\lookup\circ \update\circ v\circ \lookup\circ \update \eqw 
\lookup\circ \update\circ v$. 
By rule (eq$_3$) we get 
$\update\circ v\circ \lookup\circ \update \eqs \update\circ v$, so that 
$f \eqs u^\pure \circ \update \circ (v \circ b)^\acc$.
\end{itemize}
\end{enumerate}
\end{proof}

Thanks to Proposition~\ref{prop:sta-canonical}, 
in order to study equations in the logic $L_\sta$
we may restrict our study to pure terms, 
accessors of the form $v^\pure\circ \lookup^\acc\circ \pa_X^\pure$ 
and modifiers of the form 
$u^\pure\circ \lookup^\acc\circ \update^\modi \circ a^\acc$. 

Point~\ref{pt:sta-equations-acc-pure} in Proposition~\ref{prop:sta-canonical} 
is \emph{not} dual to Point~\ref{pt:excore-canonical-acc} 
in Proposition~\ref{prop:excore-canonical}

\begin{proposition}
\label{prop:sta-equations} 
\begin{enumerate}
\item 
\label{prop:sta-equations-modi-modi} 
For all $a_1^\acc,a_2^\acc:X\to V$ and $u_1^\pure,u_2^\pure:V\to Y$, 
let $f_1^\modi = u_1\circ \lookup\circ \update\circ a_1:X\to Y$
and $f_2^\modi = u_2\circ \lookup\circ \update\circ a_2:X\to Y$, 
then
$ f_1 \eqw f_2 $ is $ T_\sta$-equivalent to 
  $ u_1\circ a_1 \eqs u_2\circ a_2 $ 
and 
$ f_1\eqs f_2 $ is $ T_\sta$-equivalent to 
  $ \{a_1\eqs a_2 \;,\; u_1\circ a_1 \eqs u_2\circ a_2 \}$. 
\item 
\label{prop:sta-equations-modi-acc}
For all $a_1^\acc:X\to V$, $u_1^\pure:V\to Y$ and $a_2^\acc:X\to Y$, 
let $ f_1^\modi = u_1\circ \lookup\circ \update \circ a_1:X\to Y$, 
then 
$ f_1 \eqw a_2 $ is $ T_\sta$-equivalent to 
  $ u_1\circ a_1 \eqs a_2 $ 
$ f_1 \eqs a_2 $ is $ T_\sta$-equivalent to 
  $ \{ u_1\circ a_1 \eqs a_2 \;,\; a_1\eqs \lookup\circ \pa_X\} $.
\item 
\label{pt:sta-equations-acc-acc} 
Let us assume that $\pa_X^\pure$ is an epimorphism with respect to accessors. 
For all $v_1^\pure,v_2^\pure:V\to Y$ 
let $a_1^\acc = v_1\circ \lookup\circ \pa_X:X\to Y$
and $a_2^\acc = v_2\circ \lookup\circ \pa_X:X\to Y$. 
Then 
$ a_1\eqs a_2 $ is $ T_\sta$-equivalent to 
  $ v_1\eqs v_2 $.
\item 
\label{pt:sta-equations-acc-pure}
Let us assume that $\pa_V^\pure$ is an epimorphism with respect to accessors
and that there exists a pure term $k_X^\pure:\unit\to X$.
For all $v_1^\pure:V\to Y$ and $v_2^\pure: X\to Y$, 
let $a_1^\acc = v_1 \circ \lookup \circ \pa_X: X\to Y$.
Then 
$ a_1 \eqs v_2 $ is $ T_\sta$-equivalent to 
  $ \{ v_1 \eqs v_2 \circ k_X \circ \pa_V \;,\; 
v_2 \eqs v_2 \circ k_X \circ \pa_X \} $. 
\end{enumerate}
\end{proposition}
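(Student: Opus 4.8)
The plan is to prove the four items in turn, comparing in each case the canonical forms delivered by Proposition~\ref{prop:sta-canonical}. The common engine is rule (eq$_2$), which splits a strong equation $f_1\eqs f_2$ into the weak equation $f_1\eqw f_2$ together with the strong equation $\pa_Y\circ f_1\eqs\pa_Y\circ f_2$ between terms of codomain $\unit$; the latter, by Lemma~\ref{lemm:sta-ul}(\ref{pt:sta-unit}), collapses to an equation between pure terms or accessors, which rule (eq$_1$) then matches against the corresponding weak equation. Two elementary facts about accessors $b_1,b_2:X\to V$ will be reused: first, $\update\circ b_1\eqs\update\circ b_2$ is $T_\sta$-equivalent to $b_1\eqs b_2$ — one direction is (repl$_\eqs$), the other composes $\lookup$ on the left, simplifies $\lookup\circ\update$ to $\id_V$ via the axiom (ax) and (subs$_\eqw$), and concludes by (eq$_1$) since both sides are accessors; second, $\update\circ b_1\eqs\pa_X$ is $T_\sta$-equivalent to $b_1\eqs\lookup\circ\pa_X$ — here one direction uses $\update\circ\lookup\eqs\id_\unit$ from Lemma~\ref{lemm:sta-ul}(\ref{pt:sta-ul-ul}), the other is as before.

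For Points~\ref{prop:sta-equations-modi-modi} and~\ref{prop:sta-equations-modi-acc}, the first step is to observe that $f_i=u_i\circ\lookup\circ\update\circ a_i$ satisfies $f_i\eqw u_i\circ a_i$: by (ax) and (repl$_\eqw$) with the pure head $u_i$ we get $u_i\circ\lookup\circ\update\eqw u_i$, and then (subs$_\eqw$) appends $a_i$; since $u_i\circ a_i$ is an accessor, (eq$_1$) turns the weak comparison of these terms into a strong one. For the $\unit$-component, $\pa_Y\circ f_i\eqs\pa_V\circ\lookup\circ\update\circ a_i\eqs\update\circ a_i$, using rule (unit) on the pure term $\pa_Y\circ u_i$ and Lemma~\ref{lemm:sta-ul}(\ref{pt:sta-unit}) on the accessor $\pa_V\circ\lookup$. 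Plugging these two computations into (eq$_2$) and then invoking the two elementary facts above — with $b_i=a_i$ for Point~\ref{prop:sta-equations-modi-modi}, and comparing $\update\circ a_1$ with $\pa_X$ for Point~\ref{prop:sta-equations-modi-acc} (recall $\pa_Y\circ a_2\eqs\pa_X$ since $\pa_Y\circ a_2$ is an accessor into $\unit$) — yields exactly the asserted equivalences.

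Point~\ref{pt:sta-equations-acc-acc} is short: from $v_1\circ\lookup\circ\pa_X\eqs v_2\circ\lookup\circ\pa_X$ the hypothesis that $\pa_X$ is an epimorphism with respect to accessors cancels $\pa_X$, giving $v_1\circ\lookup\eqs v_2\circ\lookup$, whence $v_1\eqs v_2$ by Lemma~\ref{lemm:sta-ul}(\ref{pt:sta-ul-lulu}); the converse is immediate by (subs$_\eqs$).

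Point~\ref{pt:sta-equations-acc-pure} is the item with no analogue in the exceptions development, and I expect it to be the main obstacle, because it compares an accessor with a pure term whose pure witnesses live over the incompatible domains $V$ and $X$; the pure point $k_X:\unit\to X$ is what bridges them. Starting from $a_1=v_1\circ\lookup\circ\pa_X\eqs v_2$, precompose with $k_X$ and use $\pa_X\circ k_X\eqs\id_\unit$ (a term into $\unit$, by rule (unit)) to get $v_1\circ\lookup\eqs v_2\circ k_X$, which by Lemma~\ref{lemm:sta-ul}(\ref{pt:sta-ul-l}) is $T_\sta$-equivalent to the first target equation $v_1\eqs v_2\circ k_X\circ\pa_V$; substituting this back into $a_1$ and simplifying $\pa_V\circ\lookup$ to $\id_\unit$ via Lemma~\ref{lemm:sta-ul}(\ref{pt:sta-unit}) gives $a_1\eqs v_2\circ k_X\circ\pa_X$, hence, since $a_1\eqs v_2$, the second target equation $v_2\eqs v_2\circ k_X\circ\pa_X$. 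Conversely, from the two target equations one rewrites $v_1\circ\lookup\circ\pa_X\eqs v_2\circ k_X\circ\pa_V\circ\lookup\circ\pa_X\eqs v_2\circ k_X\circ\pa_X\eqs v_2$. As always in a decorated logic, the delicate part is the bookkeeping of side conditions: checking that (repl$_\eqw$)/(subs$_\eqw$) are invoked only with a pure head or tail, that the subterms being collapsed to $\id_\unit$ genuinely are accessors of type $\unit\to\unit$, and that the standing hypotheses of this point (the existence of $k_X$ and that $\pa_V$ is an epimorphism with respect to accessors) are what allow both the statement and the bidirectional derivation to go through.
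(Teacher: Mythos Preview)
Your proof is correct and, for Points~\ref{prop:sta-equations-modi-modi}--\ref{pt:sta-equations-acc-acc}, follows the paper's argument essentially step for step: the same use of (eq$_2$) to split strong equations, the same reduction $\pa_Y\circ f_i\eqs\update\circ a_i$ via the accessor $\pa_Y\circ u_i\circ\lookup:\unit\to\unit$, and the same two ``elementary facts'' about $\update\circ b_i$.

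For Point~\ref{pt:sta-equations-acc-pure} your route is slightly different from the paper's, and in fact cleaner. The paper, starting from $a_1\eqs v_2$, precomposes with $k_X\circ\pa_V$, obtains $v_1\circ\lookup\circ\pa_V\eqs (v_2\circ k_X)\circ\pa_V$, and then invokes the hypothesis that $\pa_V$ is an epimorphism with respect to accessors to cancel $\pa_V$. You instead precompose with $k_X$ alone, use $\pa_X\circ k_X\eqs\id_\unit$ to reach $v_1\circ\lookup\eqs v_2\circ k_X$ directly, and then apply Lemma~\ref{lemm:sta-ul}(\ref{pt:sta-ul-l}). Your argument therefore never actually uses the $\pa_V$-epimorphism assumption, which shows that this standing hypothesis is superfluous for the equivalence (though harmless, since it holds in the intended model whenever $V$ is inhabited). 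Your closing remark that both hypotheses ``are what allow \dots\ the bidirectional derivation to go through'' slightly overstates the case: only the existence of $k_X$ is genuinely needed in your version.
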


\begin{proof} 
\begin{enumerate}
\item %
Rule (eq$_2$) implies that $f_1\eqs f_2$ if and only if 
$f_1\eqw f2$ and $\pa_Y \circ f_1 \eqs \pa_Y \circ f_2$. 
On the one hand, $f_1\eqw f_2$ if and only if $u_1\circ a_1 \eqs u_1\circ a_2$:
indeed, for each $i\in\{1,2\}$, by (ax) and (repl$_\eqw$), since $u_i$ is pure 
we have $f_i \eqw u_i\circ a_i$.
On the other hand, let us prove that 
$\pa_Y \circ f_1 \eqs \pa_Y \circ f_2$ if and only if $a_1 \eqs a_2$.
\begin{itemize}
\item For each $i\in\{1,2\}$, 
the accessor $\pa_Y \circ u_i\circ \lookup : \unit\to \unit$ satisfies  
$\pa_Y \circ u_i\circ \lookup \eqs \id_\unit$,
so that $\pa_Y \circ f_i \eqs \update\circ a_i$.
Thus, $\pa_Y \circ f_1 \eqs \pa_Y \circ f_2$ if and only if 
$\update\circ a_1 \eqs\update\circ a_2$.
\item Clearly, if $a_1 \eqs a_2$ then $\update\circ a_1 \eqs\update\circ a_2$.
Conversely, if $\update\circ a_1 \eqs\update\circ a_2$
then $\lookup\circ\update\circ a_1 \eqs \lookup\circ\update\circ a_2$,
so that by (ax) and (subs$_\eqw$) we get $a_1 \eqw a_2$,
which means that $a_1 \eqs a_2$ because $a_1$ and $a_2$ are accessors. 
\end{itemize}
\item %
Rule (eq$_2$) implies that $f_1\eqs a_2$ if and only if 
$f_1\eqw a_2$ and $\pa_Y \circ f_1 \eqs \pa_Y \circ a_2$. 
On the one hand, $f_1\eqw a_2$ if and only if $u_1\circ a_1 \eqs a_2$:
indeed, by (ax) and (repl$_\eqw$), since $u_1$ is pure 
we have $f_1 \eqw u_1\circ a_1$.
On the other hand, let us prove that 
$\pa_Y \circ f_1 \eqs \pa_Y \circ a_2$ if and only if 
$a_1 \eqs \lookup \circ \pa_X $, in two steps.
\begin{itemize}
\item 
Since $ \pa_Y \circ a_2 :X\to\unit $ is an accessor,
we have $ \pa_Y \circ a_2 \eqs \pa_X$. 
Since $ \pa_Y \circ f_1 = 
\pa_Y \circ u_1\circ \lookup\circ \update \circ a_1 $ 
with $\pa_Y \circ u_1\circ \lookup:\unit\to\unit $ an accessor,
we have $\pa_Y \circ u_1\circ \lookup \eqs \id_\unit$ 
and thus we get $ \pa_Y \circ f_1 \eqs \update \circ a_1 $. 
Thus, $\pa_Y \circ f_1 \eqs \pa_Y \circ a_2 $ if and only if 
$\update \circ a_1 \eqs \pa_X$. 
\item 
If $\update \circ a_1 \eqs \pa_X$ then 
$\lookup\circ\update \circ a_1 \eqs \lookup\circ\pa_X$, 
by (ax) and (subs$_\eqw$) this implies 
$a_1 \eqw \lookup\circ\pa_X$, which is a strong equality because 
both members are accessors. 
Conversely, if $a_1 \eqs \lookup \circ \pa_X $ 
then $\update \circ a_1 \eqs \update \circ \lookup \circ \pa_X $, 
by Point~\ref{pt:sta-ul-ul} in Lemma~\ref{lemm:sta-ul}  this implies 
$\update \circ a_1 \eqs \pa_X $. 
Thus, $\update \circ a_1 \eqs \pa_X$ if and only if 
$a_1 \eqs \lookup \circ \pa_X $. 
\end{itemize}
\item %
Clearly, if $v_1\eqs v_2$ then $a_1\eqs a_2 $. 
Conversely, if $a_1\eqs a_2 $, i.e., if 
$v_1\circ \lookup\circ \pa_X \eqs v_2\circ \lookup\circ \pa_X $, 
since  $\pa_X$ is an epimorphism with respect to accessors we get 
$v_1\circ \lookup \eqs v_2\circ \lookup $.
By Point~\ref{pt:sta-ul-lulu} in Lemma~\ref{lemm:sta-ul},
this means that $v_1 \eqs v_2$. 
\item %
Let $w_2^\pure = v_2 \circ k_X :\unit\to Y$. 
Let us assume that $v_1 \eqs w_2 \circ \pa_V$ and $v_2 \eqs w_2 \circ \pa_X$. 
Equation $v_1 \eqs w_2 \circ \pa_V$ implies 
$a_1 \eqs w_2 \circ \pa_V \circ \lookup \circ \pa_X$.
Since $ \pa_V \circ \lookup\eqs \id_\unit$ 
we get $a_1 \eqs w_2 \circ \pa_X$.
Then, equation $v_2 \eqs w_2 \circ \pa_X$ implies $a_1 \eqs v_2$. 
Conversely, let us assume that $a_1 \eqs v_2$,
which means that $v_1 \circ \lookup \circ \pa_X \eqs v_2$. 
Then $v_1 \circ \lookup \circ \pa_X \circ k_X \circ \pa_V 
\eqs v_2 \circ k_X \circ\pa_V$, 
which reduces to $v_1 \circ \lookup \circ \pa_V \eqs w_2 \circ\pa_V$.
Since $\pa_V$ is an epimorphism with respect to accessors we get 
$v_1 \circ \lookup \eqs w_2 $, which means that 
$v_1 \eqs w_2 \circ\pa_V$ by Point~\ref{pt:sta-ul-l} 
in Lemma~\ref{lemm:sta-ul}. 
Now let us come back to equation $v_1 \circ \lookup \circ \pa_X \eqs v_2$; 
since $v_1 \eqs w_2 \circ\pa_V$, it yields 
$w_2 \circ\pa_V \circ \lookup \circ \pa_X \eqs v_2$, 
so that $w_2 \circ\pa_X \eqs v_2$. 
\end{enumerate}
\end{proof}

The assumption for Theorem~\ref{theo:sta-complete} 
comes form the fact that the existence of a pure term $k_X^\pure:\unit\to X$, 
which is used in Point~\ref{pt:sta-equations-acc-pure} 
of Proposition~\ref{prop:sta-equations}, 
is incompatible with the intended model of states 
if $X$ is interpreted as the empty set. 
The assumption for Theorem~\ref{theo:sta-complete} 
is \emph{not} dual to the assumption for Theorem~\ref{theo:excore-complete}.

\begin{definition}
\label{defi:inhab} 
A type $X$ is \emph{inhabited} 
if there exists a pure term $k_X^\pure:\unit\to X$.
A type $\empt$ is \emph{empty} 
if for each type $Y$ there is a pure term $\copa_Y^\pure:\empt\to Y$,
and every term $f:\empt\to Y$ is such that $f\eqs\copa_Y$. 
\end{definition}

\begin{remark}
\label{rem:inhab} 
When $X$ is inhabited then for any $k_X^\pure:\unit\to X$ 
we have $\pa_X\circ k_X \eqs \id_\unit$, 
so that $\pa_X$ is a split epimorphism; 
it follows that $\pa_X$ is an epimorphism with respect to all terms, 
and especially with respect to accessors.  
\end{remark}

\begin{theorem}
\label{theo:sta-complete}
If every non-empty type is inhabited and if $V$ is non-empty, 
the theory of states $T_\sta$ is Hilbert-Post complete
with respect to the pure sublogic $L_\sta^\pure$ of $L_\sta$. 
\end{theorem}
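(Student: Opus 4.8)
The plan is to establish completeness via Corollary~\ref{coro:hpc-equations}, following verbatim the pattern of Theorems~\ref{theo:exc-complete} and~\ref{theo:excore-complete}. Consistency of $T_\sta$ is immediate from soundness: the strong equation $\update^\modi\eqs\pa_V^\pure:V\to\unit$ fails in any model whose set of values has at least two elements, hence it is not derivable and $T_\sta\ne T_\mymax$. The substance of the proof is then to show that every equation $e$ of $L_\sta$, say $f_1\star f_2:X\to Y$ with $\star\in\{\eqs,\eqw\}$, is $T_\sta$-equivalent to a set of equations of the pure sublogic $L_\sta^\pure$.

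First I would separate the degenerate case. If $X$ is empty in the sense of Definition~\ref{defi:inhab}, then $f_1\eqs\copa_Y$ and $f_2\eqs\copa_Y$, so $e$ is a theorem and $\{e\}$ is $T_\sta$-equivalent to the empty set of pure equations. Otherwise $X$ is inhabited by hypothesis, and so is $V$; by Remark~\ref{rem:inhab} this makes $\pa_X$ and $\pa_V$ epimorphisms with respect to accessors and provides a pure point $k_X^\pure:\unit\to X$ --- these are exactly the side conditions of Points~\ref{pt:sta-equations-acc-acc} and~\ref{pt:sta-equations-acc-pure} of Proposition~\ref{prop:sta-equations}, which is where (and the only place where) the inhabitedness hypothesis is consumed.

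Next I would normalise. By Proposition~\ref{prop:sta-canonical}, each of $f_1,f_2$ is $T_\sta$-equivalent either to a pure term, to an accessor $v^\pure\circ\lookup^\acc\circ\pa_X^\pure$, or to a modifier $u^\pure\circ\lookup^\acc\circ\update^\modi\circ a^\acc$ with $a$ an accessor. According to the decorations of the two normalised sides I would apply the corresponding clause of Proposition~\ref{prop:sta-equations}: Point~\ref{prop:sta-equations-modi-modi} for two modifiers, Point~\ref{prop:sta-equations-modi-acc} for a modifier against an accessor (a pure side counting as an accessor), Point~\ref{pt:sta-equations-acc-acc} for two canonical accessors, Point~\ref{pt:sta-equations-acc-pure} for a canonical accessor against a pure term, and the rule (eq$_1$) when both sides are pure (turning a weak equation between pure terms into the strong, hence pure, one). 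Each step rewrites $e$ into a $T_\sta$-equivalent finite set of equations whose sides are accessors or pure terms; I would then re-normalise the accessor-valued sides with Proposition~\ref{prop:sta-canonical} and eliminate them through Points~\ref{pt:sta-equations-acc-acc} and~\ref{pt:sta-equations-acc-pure}, whose outputs are equations of $L_\sta^\pure$. Since each normalisation strictly lowers the decoration (a modifier normalises to an accessor, an accessor to a pure term), this terminates, and Corollary~\ref{coro:hpc-equations} gives the result.

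The main obstacle is the bookkeeping of this case analysis rather than any single hard argument: one must verify that every accessor arising as a side of an intermediate equation is genuinely brought into the shape $v\circ\lookup\circ\pa_X$ before Points~\ref{pt:sta-equations-acc-acc}--\ref{pt:sta-equations-acc-pure} are applied, so that the recursion bottoms out in pure equations, and that the weak/strong decoration of $e$ is propagated correctly through every application of Proposition~\ref{prop:sta-equations}. The other delicate point is that, unlike the empty-domain case of Theorem~\ref{theo:excore-complete}, the inhabitedness hypothesis here has no sound dual for states (an initial, i.e.\ empty, type would break soundness with respect to the intended model), so the place where it is invoked is precisely where the present proof departs from a mechanical dualisation of the proof of Theorem~\ref{theo:excore-complete}.
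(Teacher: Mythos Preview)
Your proposal is correct and follows essentially the same route as the paper's proof: consistency via the unprovable equation $\update\eqs\pa_V$, then Corollary~\ref{coro:hpc-equations} together with the case split on whether $X$ is empty, invoking Propositions~\ref{prop:sta-canonical} and~\ref{prop:sta-equations} in the inhabited case. You spell out in more detail than the paper does the iteration needed to descend from modifier-level equations through accessor-level equations to pure ones, which the paper compresses into a single sentence; your termination argument for that recursion is sound. (Your closing remark that an empty type ``would break soundness'' is not quite right---Definition~\ref{defi:inhab} allows empty types in $L_\sta$ and the proof handles them---but this is a side comment and does not affect the argument.)
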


\begin{proof}
Using Corollary~\ref{coro:hpc-equations},
the proof relies upon Propositions~\ref{prop:sta-canonical} and 
\ref{prop:sta-equations}. 
it follows the same lines as the proofs of Theorems~\ref{theo:exc-complete}
and~\ref{theo:excore-complete}. 
The theory $T_\sta$ is consistent:
it cannot be proved that $\update^\modi \eqs \pa_V^\pure$ 
because the logic $L_\sta$ is sound with respect to its intended model
and the interpretation of this equation in the intended model is false
as sson as $V$ has at least two elements: 
indeed, for each state $s$ and each $x\in V$, 
$\lookup \circ \update (x,s) = x$ because of (ax) 
while $\lookup\circ \pa_V (x,s) = \lookup(s)$ does not depend on $x$. 
Let us consider an equation (strong or weak) 
between terms with domain $X$ in $L_\sta$; we distinguish two cases, 
whether $X$ is empty or not. 
When $X$ is empty, then all terms from $X$ to $Y$ 
are strongly equivalent to $\copa_Y$, 
so that the given equation 
is $T_\sta$-equivalent to the empty set of equations between pure terms. 
When $X$ is non-empty then it is inhabited,
thus by Remark~\ref{rem:inhab} 
$\pa_X$ is an epimorphism with respect to accessors.  
Thus, Propositions~\ref{prop:sta-canonical} and~\ref{prop:sta-equations} 
prove that the given equation 
is $T_\sta$-equivalent to a finite set of equations between pure terms. 
\end{proof}

\begin{remark}
\label{rem:sta-complete}
This can be generalized to an arbitrary number of locations. 
The logic $L_\sta$ and the theory $T_\sta$ 
have to be generalized as in \cite{DDFR12-state}, 
then Proposition~\ref{prop:sta-canonical} has to be adapted 
using the basic properties of $\lookup$ and $\update$,    
as stated in  \cite{PlotkinPower02}; 
these properties can be deduced from the decorated theory for states,  
as proved in \cite{DDEP13-coq}. 
The rest of the proof generalizes accordingly, 
as in \cite{Pretnar10}. 
\end{remark}

\end{document}